\documentclass[11pt,reqno]{amsart}
\usepackage[left=2.15cm,right=2.15cm,top=2.05cm,bottom=2.25cm]{geometry}
\usepackage{amsmath,amssymb,amsthm,mathtools,bm,mathrsfs,dsfont}
\usepackage[mathscr]{eucal}
\usepackage{amsaddr}
\usepackage{microtype}
\usepackage{enumitem}
\usepackage{booktabs}
\usepackage[hidelinks]{hyperref}
\usepackage[nameinlink,capitalize,noabbrev]{cleveref}
\usepackage{aliascnt}
\usepackage{subcaption}
\usepackage{tikz}
\usetikzlibrary{decorations.pathreplacing, calligraphy}
\usepackage{etoolbox}

\makeatletter
\patchcmd{\@tocline}
{\hfil}
{\dotfill}
{}{}
\makeatother

\allowdisplaybreaks
\setlist[itemize]{leftmargin=2em,itemsep=0.25em,topsep=0.35em}
\setlist[enumerate]{leftmargin=2.4em,itemsep=0.35em,topsep=0.35em}

\newtheorem{theorem}{Theorem}[section]
\newaliascnt{proposition}{theorem}
\newtheorem{proposition}[proposition]{Proposition}
\aliascntresetthe{proposition}
\newaliascnt{lemma}{theorem}
\newtheorem{lemma}[lemma]{Lemma}
\aliascntresetthe{lemma}
\newaliascnt{corollary}{theorem}
\newtheorem{corollary}[corollary]{Corollary}
\aliascntresetthe{corollary}
\theoremstyle{definition}
\newaliascnt{remark}{theorem}
\newtheorem{remark}[remark]{Remark}
\aliascntresetthe{remark}

\crefname{theorem}{Theorem}{Theorems}
\crefname{proposition}{Proposition}{Propositions}
\crefname{lemma}{Lemma}{Lemmas}
\crefname{corollary}{Corollary}{Corollaries}
\crefname{remark}{Remark}{Remarks}

\DeclareMathOperator{\Tr}{Tr}
\DeclareMathOperator{\supp}{supp}
\newcommand{\I}{\mathbf 1}
\newcommand{\A}{\mathsf{A}}
\newcommand{\B}{\mathsf{B}}
\newcommand{\Y}{\mathsf{Y}}

\newcommand{\states}[1]{\mathcal{S}(#1)}
\newcommand{\dd}{\mathrm d}
\newcommand{\Breg}{\mathfrak{B}}
\newcommand{\norm}[1]{\left\lVert#1\right\rVert}
\newcommand{\abs}[1]{\left\lvert#1\right\rvert}
\newcommand{\sand}{\widetilde}

\hypersetup{
  pdftitle={Sharp Continuity of Petz and Sandwiched R\'enyi Conditional Entropies},
  pdfauthor={Hao-Chung Cheng and Po-Chieh Liu},
  pdfsubject={Quantum information theory},
  pdfkeywords={R\'enyi conditional entropy, continuity bound,
    Petz--R\'enyi divergence, sandwiched Renyi divergence}
}

\begin{document}

\let\origmaketitle\maketitle
\def\maketitle{
	\begingroup
	\def\uppercasenonmath##1{} 
	\let\MakeUppercase\relax 
	\origmaketitle
	\endgroup
}

\title[]{\bfseries \Large{ 
		Sharp Continuity of Petz and Sandwiched R\'enyi Conditional Entropies
}}

\author{ \normalsize 
	{Hao-Chung Cheng}$^{1\text{--}5}$
    and
	{Po-Chieh Liu}$^{1,2}$
}
\address{\small  	
	$^1$Department of Electrical Engineering and Graduate Institute of Communication Engineering,\\ National Taiwan University, Taipei 106, Taiwan (R.O.C.)\\
	$^2$Department of Mathematics, National Taiwan University\\
	$^3$Center for Quantum Science and Engineering, National Taiwan University\\
	$^4$Hon Hai (Foxconn) Quantum Computing Center, New Taipei City 236, Taiwan (R.O.C.)\\
	$^5$Physics/Mathematics Divisions, National Center for Theoretical Sciences, Taipei 10617, Taiwan (R.O.C.)
}

\email{\href{mailto:haochung.ch@gmail.com}{haochung.ch@gmail.com}}

\date{}


\begin{abstract}
We determine the sharp modulus of continuity, in trace distance, of the optimized Petz and sandwiched R\'enyi conditional entropies for every order $\alpha\in[\frac12,1)$.
If two bipartite states are within trace distance $\delta$,  then both conditional entropies differ by at most
$\frac{1}{1-\alpha}
 \log[(1-\varepsilon)^{\alpha}
 +(D-1)^{1-\alpha}\varepsilon^{\alpha}]$, where $\varepsilon := \min\{\delta,1-1/D\}$ and $D$ is the effective dimension, given by the dimension of the first
subsystem times the largest possible Schmidt rank.
For every distance constraint $\delta\in[0,1]$, the bound is attained by
an isotropic pair with a maximally entangled anchor.
Taking $\alpha\uparrow1$ recovers the recent sharp continuity bound of quantum conditional entropy by Berta \emph{et al.} [\href{https://arxiv.org/abs/2607.24687}{arXiv:2607.24687}].

The proof linearizes the relevant concave R\'enyi functional at a
comparison point dictated by the isotropic equality family.
Schmidt-rank domination extends the equality geometry to an arbitrary anchor state, after which trace-distance duality and a noncommutative calibration estimate control the perturbation and anchor term without weakening the sharp constant.
The latter estimate requires matrix analysis and is assisted by ChatGPT 5.6 Sol.
\end{abstract}

\maketitle

\vspace{-20pt}

\tableofcontents

\section{Introduction}\label{sec:introduction}

Continuity bounds quantify the stability of information measures under
perturbations of the underlying state.  The study began with Fannes'
inequality and the sharp entropy bound of Audenaert
\cite{Fannes1973,Audenaert2007}.  For quantum conditional entropy,
Alicki and Fannes obtained the first dimensionally useful estimate, which
was later substantially strengthened by Winter
\cite{AlickiFannes2004,Winter2016}.  Sharp bounds were subsequently
obtained in the classical setting and when the conditioning system is
classical \cite{AlhejjiSmith2020,Wilde2020}.  For fully quantum states,
the sharp continuity bound was first established under the fixed-marginal
assumption $\rho_{\B}=\sigma_{\B}$ in the independent works
\cite{BertaLamiTomamichel2025,AudenaertEtAl2025}; the unrestricted case,
together with an effective-dimension refinement, was resolved recently in
\cite{BertaEtAl2026}.

Conditional R\'enyi entropies are parametric generalizations of the conditional quantum entropy.
They are operationally relevant because they characterize error and
strong-converse exponents in several quantum-information tasks.
Petz--R\'enyi conditional entropies of orders $\alpha\in[\frac12,1)$ appear in direct
error-exponent analyses for classical data compression with quantum side
information \cite{CHDH-2018, Ren25, preparation} and, more recently, for quantum state merging,
entanglement distillation, and quantum communication
\cite{BertaChengYao2026}. 
Sandwiched R\'enyi quantities of orders
$\alpha\in[\frac12,1)$ characterize reliability and strong-converse
exponents in quantum decoupling, privacy amplification, and related
information-processing tasks
\cite{LY21a,LY24a,BertaChengYao2026,CDG24,RegulaTomamichel2026}.  
Sharp continuity bounds thus
provide robustness estimates for quantities that directly govern
the exponential behavior of these protocols.

For R\'enyi conditional entropy, Jabbour and Datta proved the sharp
continuity bound for classical distributions and for a classical
conditioning system, throughout $0\leq\alpha<1$
\cite{JabbourDatta2022}.  
For fully quantum states, continuity bounds for
the sandwiched R\'enyi conditional entropy for $\alpha \geq 1/2$ and related quantities were
studied in \cite{MarwahDupuis2022,BeigiGoodarzi2023,BluhmCapelGondolfMoebus2026}.
In this paper, we determine the sharp
fully quantum continuity bounds for both the optimized Petz and sandwiched
R\'enyi conditional entropies for every
$\alpha\in[\frac12,1)$.

Let $d_{\A}$ and $d_{\B}$ denote the dimensions of systems $\A$ and
$\B$, respectively, and let
$D:=d_{\A}\min\{d_{\A},d_{\B}\}$ be the effective dimension.
If two quantum states $\rho_{\A\B}$ and $\sigma_{\A\B}$ are $\delta$-close in trace distance, then the absolute differences of both the
optimized Petz and sandwiched R\'enyi conditional entropies are bounded by
\begin{equation}\label{eq:introduction-modulus}
 \begin{cases}
 \displaystyle
 \frac{1}{1-\alpha}
 \log\!\left[
 (1-\delta)^\alpha
 +(D-1)^{1-\alpha}\delta^\alpha
 \right],
 &
 0\leq\delta\leq1-\dfrac1D,
 \\[3mm]
 \log D,
 &
 1-\dfrac1D<\delta\leq1.
 \end{cases}
\end{equation}
The bound is optimal for every distance constraint $\delta \in [0,1]$.  

Let $\Phi_r$ be a maximally entangled state of Schmidt rank
$r=\min\{d_{\A},d_{\B}\}$, and let $
 P_{\B_0}:=r\Tr_{\A}[\Phi_r]$
be the projection onto the support of its $\B$-marginal.
For $0\leq\delta\leq1-D^{-1}$, equality is attained by
\begin{align}
 \rho_{\A\B}
 &=
 (1-\delta)\Phi_r
 +\frac{\delta}{D-1}
 \left(
 \I_{\A}\otimes P_{\B_0}-\Phi_r
 \right),
 \qquad
 \sigma_{\A\B}=\Phi_r.
 \label{eq:introduction-equality-pair}
\end{align}
For larger distance constraints, the endpoint
$\delta=1-D^{-1}$ remains admissible and attains the plateau $\log D$.
The limit
$\alpha\uparrow1$ recovers the sharp von Neumann conditional-entropy bound
of \cite{BertaEtAl2026}, i.e.~
\begin{align}
    \lim_{\alpha\uparrow1}
 \frac1{1-\alpha}
 \log\!\left[(1-\delta)^\alpha
 +(D-1)^{1-\alpha}\delta^\alpha\right]
 =h_2(\delta)+\delta\log(D-1),
\end{align}
where $h_2(\delta) := \delta \log \frac{1}{\delta} + (1-\delta) \log \frac{1}{1-\delta}$ is the binary entropy function.

The proof is guided by the isotropic states in \cref{eq:introduction-equality-pair} that saturate the bound.  The
$\alpha$-power of an extremal isotropic state splits into a maximally
entangled component and its orthogonal complement, whose partial traces
occur with multiplicities $1$ and $D-1$.  
For a general anchor state $\sigma_{\A\B}$,
Schmidt-rank domination \cite{terhal2000schmidt} produces a positive complement with exactly the
same partial-trace multiplicity.  Weighting the $\alpha$-power of the
anchor and this complement in the same proportions as in the isotropic
equality model yields an auxiliary comparison point whose R\'enyi
functional has precisely the value appearing in the claimed bound; see
\cref{fig:schmidt-envelope} for the schematic illustration.

We then linearize the relevant concave R\'enyi functional $\mathbb{Q}_\alpha(\rho)$ at this
comparison point $\tau_{\A\B}$:
\begin{align} \label{eq:concave_envelope}
\mathbb{Q}_\alpha(\rho)
 &\leq \Tr\!\left[\nabla\mathbb{Q}_\alpha(\tau)\rho_{\A\B}\right]
 =
 \underbrace{\Tr\!\left[\nabla\mathbb{Q}_\alpha(\tau)(\rho_{\A\B}-\sigma_{\A\B})\right]}_{\text{change-of-measure term}}
 +
\underbrace{\Tr\!\left[\nabla\mathbb{Q}_\alpha(\tau)\sigma_{\A\B}\right]}_{\text{anchor term}},
\end{align}
where $\nabla\mathbb{Q}_\alpha(\tau)$ is the Hilbert--Schmidt gradient operator at $\tau_{\A\B}$, as illustrated in \cref{fig:linearization}.
The variational trace-distance bound controls the change of the resulting
gradient operator from the anchor state $\sigma_{\A\B}$ to
$\rho_{\A\B}$, while a Fr\'echet-derivative noncommutative estimate controls the anchor term.
These two estimates reproduce the isotropic
equality calculation without degrading the sharp constant.
The sandwiched
proof uses the same comparison geometry; its gradient operator is
identified from the optimizer fixed-point equation \cite{HT14} and
Danskin's theorem \cite{Danskin1967}.

When the conditioning system is classical, the relevant operators commute
within every classical block.  The noncommutative estimates then reduce to
scalar power inequalities, so the Petz result extends to all
$0<\alpha<1$, recovering the Jabbour--Datta theorem
\cite{JabbourDatta2022}.

\begin{figure*}[htbp]\centering
	
	\begin{subfigure}[t]{0.48\textwidth}
		\centering
		\begin{tikzpicture}[>=stealth]
			\def\totalwidth{6.2}
			\def\statewidth{1.8} 
			\def\height{2.8}
			\colorlet{statebox}{gray!40}
			\colorlet{rembox}{gray!15}
			\colorlet{textdark}{black!80}
			
			\draw[decorate, decoration={calligraphic brace, amplitude=8pt}, thick, text=textdark]
			(0,\height+0.1) -- (\totalwidth,\height+0.1) 
			node[midway, above=10pt, align=center] {Isotropic envelope\\[1mm] $\I_{\A}\otimes P_{\B_0} \vphantom{\Tr_{\A}[\sigma_{\A\B}^\alpha]}$};
			
			\draw[thick, fill=statebox, text=textdark] (0,0) rectangle (\statewidth,\height);
			\node[align=center, text=textdark] at (\statewidth/2, \height/2) {
				Bell\\[1mm] sector\\[2mm] $\Phi_r$
			};
			
			\draw[thick, fill=rembox, text=textdark] (\statewidth,0) rectangle (\totalwidth,\height);
			\node[align=center, text=textdark] at ({(\totalwidth+\statewidth)/2}, \height/2) {
				Complement\\\\[2mm] $\I_{\A}\otimes P_{\B_0} - \Phi_r$
			};
			
			\draw[decorate, decoration={calligraphic brace, amplitude=6pt, mirror}, thick, text=textdark]
			(0,-0.2) -- (\statewidth,-0.2) 
			node[midway, below=8pt, align=center] {Partial trace\\[1mm]$1 \times \frac{1}{r}P_{\B_0}$};
			
			\draw[decorate, decoration={calligraphic brace, amplitude=6pt, mirror}, thick, text=textdark]
			(\statewidth,-0.2) -- (\totalwidth,-0.2) 
			node[midway, below=8pt, align=center] {Partial trace\\[1mm]$(D-1) \times \frac{1}{r}P_{\B_0}$};
		\end{tikzpicture}
		\caption{Isotropic equality model}
		\label{fig:envelope_saturation}
	\end{subfigure}\hfill
	\begin{subfigure}[t]{0.48\textwidth}
		\centering
		\begin{tikzpicture}[>=stealth]
			\def\totalwidth{6.2}
			\def\statewidth{1.8} 
			\def\height{2.8}
			\colorlet{statebox}{gray!40}
			\colorlet{rembox}{gray!15}
			\colorlet{textdark}{black!80}
			
			\draw[decorate, decoration={calligraphic brace, amplitude=8pt}, thick, text=textdark]
			(0,\height+0.1) -- (\totalwidth,\height+0.1) 
			node[midway, above=10pt, align=center] {Schmidt-rank envelope\\[1mm] $r\I_{\A}\otimes\Tr_{\A}[\sigma_{\A\B}^\alpha]$};
			
			\draw[thick, fill=statebox, text=textdark] (0,0) rectangle (\statewidth,\height);
			\node[align=center, text=textdark] at (\statewidth/2, \height/2) {
				Anchor\\[1mm] $\alpha$-power\\[2mm] $\sigma_{\A\B}^\alpha$
			};
			
			\draw[thick, fill=rembox, text=textdark] (\statewidth,0) rectangle (\totalwidth,\height);
			\node[align=center, text=textdark] at ({(\totalwidth+\statewidth)/2}, \height/2) {
				Positive remainder\\\\[2mm] $r\I_{\A}\otimes\Tr_{\A}[\sigma_{\A\B}^\alpha] - \sigma_{\A\B}^\alpha$
			};
			
			\draw[decorate, decoration={calligraphic brace, amplitude=6pt, mirror}, thick, text=textdark]
			(0,-0.2) -- (\statewidth,-0.2) 
			node[midway, below=8pt, align=center] {Partial trace\\[1mm]$1 \times \Tr_{\A}[\sigma_{\A\B}^\alpha] \vphantom{\frac{1}{r}}$};
			
			\draw[decorate, decoration={calligraphic brace, amplitude=6pt, mirror}, thick, text=textdark]
			(\statewidth,-0.2) -- (\totalwidth,-0.2) 
			node[midway, below=8pt, align=center] {Partial trace\\[1mm]$(D-1) \times \Tr_{\A}[\sigma_{\A\B}^\alpha] \vphantom{\frac{1}{r}}$};
		\end{tikzpicture}
		\caption{Arbitrary-anchor decomposition}
		\label{fig:envelope_generalization}
	\end{subfigure}
	
	\caption{
		Schematic correspondence between the isotropic equality model and the
		arbitrary-anchor construction. 
        Here, $P_{\B_0}$ denotes the projection onto the $r$-dimensional subspace of $\B$.
        In both panels, the darker positive
		operator and its lighter complement have partial traces (tracing out system $\A$) proportional to
		the same operator on $\B$, with proportionality factors $1$ and $D-1$.
		In panel~{(A)}
		the two summands have orthogonal supports, whereas in panel~{(B)}
		they need not commute or have orthogonal supports.
		The
		exact agreement of the partial-trace multiplicities motivates assigning
		the same weights $a$ and $b$ in
		\cref{eq:comparison-point-explicit}.
	}
	\label{fig:schmidt-envelope}
	
\end{figure*}

\begin{figure}[th]
	\centering
	\begin{tikzpicture}[scale=1.6, >=stealth]	
		\draw[thick, domain=1.7:4.8, smooth, variable=\x] 
		plot ({\x}, {-0.2*\x*\x + 2.1*\x - 2.2});
		
		\draw[thick, blue, dashed, domain=1.5:4.8, variable=\x] 
		plot ({\x}, {0.5*\x + 1});
		
		
		\coordinate (Tau) at (4, 3);
		\coordinate (TauX) at (4, 0.9);
		
		\coordinate (RhoCurve) at (2.2, 1.45);
		\coordinate (RhoTangent) at (2.2, 2.1);
		\coordinate (RhoX) at (2.2, 0.9);
		
		\draw[thick, dotted] (Tau) -- (TauX) node[below] {$\tau$};
		\draw[thick, dotted] (RhoTangent) -- (RhoX) node[below] {$\rho$};
		
		\draw[thick, dotted] (Tau) -- (4.4, 3) 
		node[right] {$Q_\alpha(\tau) = \mathrm{Tr}\!\left[\nabla Q_{\alpha}(\tau) \tau\right]$};
		
		\draw[thick, dotted] (RhoTangent) -- (4.4, 2.1) 
		node[right] {$\mathrm{Tr}\!\left[\nabla Q_{\alpha}(\tau) \rho\right]$};
		
		\draw[thick, dotted] (RhoCurve) -- (4.4, 1.45) 
		node[right] {$Q_\alpha(\rho)$};
		
	\end{tikzpicture}
	\caption{Linearization of the R\'enyi functional $\mathbb{Q}_{\alpha}(\rho)$ (defined in \cref{eq:Q-functional} and \cref{eq:Q}, respectively) 
	at the comparison point $\tau$ via the Hilbert--Schmidt gradient operator $\nabla Q_{\alpha}(\tau)$ defined in \cref{eq:general-gradient}.
	For the Bell anchor state $\Phi_r$, the chosen comparison point $\tau = \tau_{\Phi_r,\delta}^{(\alpha)}$ in \cref{eq:comparison-point-explicit} coincides with the worst-case isotropic state $\rho_\delta$ in \cref{eq:quantum-equality-family}.
	The gap between the tangent line and the R\'enyi functional curves thus collapses, i.e., $Q_{\alpha}(\rho_{\delta}) = \Tr[\nabla Q_{\alpha}(\tau)\rho_{\delta}] = Q_{\alpha}(\tau)$.
	} \label{fig:linearization}
\end{figure}

The paper is organized as follows.
\Cref{sec:preliminaries} states the main results and collects the elementary
ingredients.  \Cref{sec:quantum-model} uses the isotropic saturation model
to motivate the comparison point.
\Cref{sec:main-proof,sec:sand_main-proof} prove the Petz and sandwiched
continuity bounds, respectively.
\Cref{sec:classical-conditioning} treats a classical conditioning system.
\Cref{sec:discussion} provides discussions and outlook.
We leave the technical estimates to  \cref{sec:transport}.

\medskip
\emph{We would like to highlight again that this manuscript and its content were partly developed with the assistance
	of ChatGPT 5.6 Sol, a generative AI tool.
}

\section{Main results and elementary ingredients}\label{sec:preliminaries}

All Hilbert spaces are finite dimensional and all logarithms are natural.
We denote by $\mathcal{S}(\B)$ the state space on quantum system $\B$.
For
states $\rho$ and $\sigma$, their trace distance is
\begin{equation}\label{eq:trace-distance}
 T(\rho,\sigma):=\frac12\norm{\rho-\sigma}_1.
\end{equation}
Let
\begin{equation}
 r:=\min\{d_{\A},d_{\B}\},
 \qquad
 D:=d_{\A}r,
 \label{eq:rD}
\end{equation}
and define
\begin{equation}
 \Gamma_{\alpha,D}(\delta)
 :=\frac{1}{1-\alpha}
 \log\!\left[(1-\varepsilon)^{\alpha}+(D-1)^{1-\alpha}\varepsilon^{\alpha}\right],
 \qquad
 \varepsilon:=\min\{\delta,1-D^{-1}\}.
 \label{eq:Gamma}
\end{equation}

Let $\rho$ be a state and $\sigma\geq 0$ be a positive semidefinite operator.
For $0<\alpha<1$, the Petz--R\'enyi divergence \cite{Pet86} is
\begin{equation}\label{eq:Petz-divergence}
 D_\alpha(\rho\|\sigma)
 :=\frac{1}{\alpha-1}\log\Tr[\rho^\alpha\sigma^{1-\alpha}],
\end{equation}
and the (optimized) Petz--R\'enyi conditional entropy is
\begin{equation}\label{eq:conditional-entropy}
 H_\alpha^\uparrow(\A|\B)_\rho
 :=-\inf_{\sigma_{\B}\in\states{\B}}
 D_\alpha\!\left(\rho_{\A\B}\middle\|\I_{\A}\otimes\sigma_{\B}\right).
\end{equation}

For $X_{\A\B}\geq0$, define
\begin{equation}\label{eq:Q-functional}
 Q_\alpha(X_{\A\B})
 :=\Tr\!\left(\Tr_{\A}[X_{\A\B}^\alpha]\right)^{1/\alpha}.
\end{equation}
The quantum Sibson identity \cite[Lemma~3]{SharmaWarsi2013} gives
\begin{equation}\label{eq:Sibson-formula}
 H_\alpha^\uparrow(\A|\B)_\rho
 =\frac{\alpha}{1-\alpha}\log Q_\alpha(\rho).
\end{equation}

The functional $Q_\alpha$ is positively homogeneous of degree one and concave
on the positive cone for every $0<\alpha<1$.  This is a specialization of
Epstein's trace-concavity theorem \cite{Epstein1973}; see also the systematic
trace-functional treatment in \cite{CarlenFrankLieb2016}.

Our first main result is the following sharp continuity bound for the Petz--R\'enyi conditional entropy.

\begin{theorem}[Petz sharp continuity bound]\label{thm:main}
	Let $\alpha\in[\frac12,1)$ and let $\rho_{\A\B},\sigma_{\A\B}$ be states
	satisfying $T(\rho,\sigma)\leq\delta$.
	Let
	$D=d_{\A}\min\{d_{\A},d_{\B}\}$ be the effective dimension.  Then
	\begin{equation}\label{eq:main-bound-piecewise}
		\boxed{\;\;
			\abs{H_\alpha^\uparrow(\A|\B)_\rho
				-H_\alpha^\uparrow(\A|\B)_\sigma}
			\leq
			\Gamma_{\alpha,D}(\delta),
            \;\;}
	\end{equation}
    where $\Gamma_{\alpha,D}$ is defined in \cref{eq:Gamma}.
	The right-hand side is optimal for every distance constraint.
\end{theorem}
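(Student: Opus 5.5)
By symmetry of the trace-distance hypothesis, it suffices to show the one-sided bound
\[
H_\alpha^\uparrow(\A|\B)_\rho-H_\alpha^\uparrow(\A|\B)_\sigma\le\Gamma_{\alpha,D}(\delta)
\]
for every anchor $\sigma$ and every $\rho$ with $T(\rho,\sigma)\le\delta$. By \cref{eq:Sibson-formula} this is equivalent to the multiplicative estimate
\[
Q_\alpha(\rho)\le Q_\alpha(\sigma)\,\bigl[(1-\varepsilon)^\alpha+(D-1)^{1-\alpha}\varepsilon^\alpha\bigr]^{1/\alpha}.
\]
We may take $\delta\le 1-D^{-1}$. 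Larger $\delta$ give only the plateau $\log D$, which follows from the standard range $-\log r\le H_\alpha^\uparrow\le\log d_{\A}$; since $D=d_{\A}r$, this range has width $\log D$.

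\emph{Step 1: the isotropic computation.} For $\rho_\delta$ in \cref{eq:introduction-equality-pair} the spectrum is explicit. The eigenvalue $1-\delta$ sits on $\Phi_r$, and the eigenvalue $\delta/(D-1)$ has multiplicity $D-1$ on the complement. Moreover $\Tr_{\A}\Phi_r=P_{\B_0}/r$ and $\Tr_{\A}(\I_{\A}\otimes P_{\B_0}-\Phi_r)=(D-1)P_{\B_0}/r$. Hence
\[
\Tr_{\A}\rho_\delta^\alpha=\bigl[(1-\delta)^\alpha+(D-1)^{1-\alpha}\delta^\alpha\bigr]\,P_{\B_0}/r,
\]
and $Q_\alpha(\Phi_r)=r^{1-1/\alpha}$. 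This gives equality in the bound, which settles optimality for every $\delta$; the endpoint pair is reused for the plateau.

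\emph{Step 2: the comparison point and linearization.} For a general $\sigma$, let $\Omega:=r\,\I_{\A}\otimes\Tr_{\A}[\sigma^\alpha]-\sigma^\alpha$. This operator is positive by Schmidt-rank domination, i.e.\ $X\le r\,\I_{\A}\otimes\Tr_{\A}X$ for $X\ge0$ whose eigenvectors have Schmidt rank at most $r$. Its partial trace is $(D-1)\Tr_{\A}\sigma^\alpha$. Define
\[
\tau^\alpha:=a\,\sigma^\alpha+b\,\Omega,\qquad a=\frac{(1-\delta)^\alpha}{c},\quad b=\frac{(\delta/(D-1))^\alpha}{c},
\]
with $c$ a normalizing constant. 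This mirrors Step 1, so that $\Tr_{\A}\tau^\alpha$ is proportional to $\Tr_{\A}\sigma^\alpha$ with the exact isotropic factor.

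Concavity and degree-one homogeneity of $Q_\alpha$ give $Q_\alpha(\rho)\le\Tr[G\rho]$ with $G=\nabla Q_\alpha(\tau)$. Explicitly,
\[
G=\mathrm{D}(x\mapsto x^\alpha)[\tau]^{*}\bigl(\I_{\A}\otimes(\Tr_{\A}\tau^\alpha)^{1/\alpha-1}\bigr),
\]
a Fréchet derivative of the power map. Writing $\Tr[G\rho]=\Tr[G(\rho-\sigma)]+\Tr[G\sigma]$, the first term is bounded by $\delta(\lambda_{\max}(G)-\lambda_{\min}(G))$ by trace-distance duality. One can also use the sharper form $\sup_{0\le P\le\I}\Tr[P(\rho-\sigma)]$ together with the block structure of $G$.

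\emph{Step 3: the calibration estimate.} I must show that the change-of-measure bound plus the anchor term $\Tr[G\sigma]$ reproduce exactly $Q_\alpha(\sigma)[(1-\delta)^\alpha+(D-1)^{1-\alpha}\delta^\alpha]^{1/\alpha}$. In the isotropic case, $G$ has eigenvalues proportional to $\alpha(1-\delta)^{\alpha-1}$ on $\Phi_r$ and $\alpha(\delta/(D-1))^{\alpha-1}$ on the complement, and the sum matches. In general $\sigma^\alpha$ and $\Omega$ do not commute, so I need operator inequalities of the form
\[
\Tr\bigl[\mathrm{D}(x^\alpha)[\tau](Y)\,\sigma\bigr]\le\cdots
\quad\text{and}\quad
G\le\alpha\,\bigl(\tfrac{\delta}{D-1}\bigr)^{\alpha-1}(\text{scalar})\,\I.
\]
I would first try the integral representation of $x^{\alpha}$ for $\alpha\in(0,1)$ together with operator monotonicity/concavity of $x^{1/\alpha-1}$ or $x^{\alpha}$. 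The restriction $\alpha\ge\tfrac12$ should enter here, since then $1/\alpha-1\le1$ and $x^{1/\alpha-1}$ is operator concave. The idea is to compare $G$ to a sandwiched expression such as $\tau^{(\alpha-1)/2}(\cdot)\tau^{(\alpha-1)/2}$ and use $\tau^\alpha\ge a\sigma^\alpha$ to bound $\Tr[G\sigma]$ by $a^{\cdot}$ times $\Tr[\sigma^\alpha(\I\otimes M)]$.

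This noncommutative calibration is the main obstacle. Once both estimates are available, the final step is to optimize nothing further: $a,b$ are already the isotropic weights, and the algebra collapses to $\Gamma_{\alpha,D}(\delta)$.
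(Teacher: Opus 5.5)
Your setup is the paper's: the same comparison point $\tau^\alpha=a\sigma^\alpha+b\,(r\I_{\A}\otimes\Tr_{\A}[\sigma^\alpha]-\sigma^\alpha)$, the same supporting-hyperplane linearization, and the same change-of-measure/anchor split. But there is a genuine gap. Step~3, which you flag as the main obstacle, is the entire noncommutative content of the theorem, and you give only a heuristic for it. With $c_\delta:=\bigl(\frac{a+(D-1)b}{r}\bigr)^{\frac{1-\alpha}{\alpha}}$, you must prove
\[
c_\delta\,a^{-\frac{1-\alpha}{\alpha}}\I\le G\le c_\delta\,b^{-\frac{1-\alpha}{\alpha}}\I,
\qquad
\Tr[G\sigma]\le\Bigl(\frac{a+(D-1)b}{a}\Bigr)^{\frac{1-\alpha}{\alpha}}Q_\alpha(\sigma).
\]
The lower spectral bound is needed as well as the upper one, since the change-of-measure term is controlled by $\lambda_{\max}(G)-\lambda_{\min}(G)$.

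The missing idea is to write $G$ through the inverse derivative of $x^p$ at $\tau^\alpha$, with $p=1/\alpha\in(1,2]$. Set $X=\sigma^\alpha$, $Z=r\I_{\A}\otimes\Tr_{\A}[\sigma^\alpha]$ and $\theta=b/a$. Then $\tau^\alpha=a[X+\theta(Z-X)]$, and $\I_{\A}\otimes(\Tr_{\A}\tau^\alpha)^{\frac{1-\alpha}{\alpha}}$ is a multiple of $Z^{p-1}$. Hence $G=c_\delta\,a^{1-p}Y_\theta$ with $Y_\theta:=p\,\mathrm D^{-1}_{p,X+\theta(Z-X)}(Z^{p-1})$, and $Y_1=\I$.

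\begin{itemize}
\item \emph{Spectral bounds.} From $\theta Z\le X+\theta(Z-X)\le Z$, operator monotonicity of $x^{p-1}$, and positivity of $\mathrm D^{-1}_{p,\cdot}$, you get $\I\le Y_\theta\le\theta^{1-p}\I$.
\item \emph{Anchor bound.} It reduces to $\Tr[X^pY_\theta]\le\Tr[XZ^{p-1}]$. The paper proves this with a positive tangent defect (operator convexity of $x^p$), whose trace is a Bregman gap. That gap is controlled by the common-shift contraction $\Breg_p(A+E,A+E+C)\le\Breg_p(A,A+C)$ (operator concavity of $x^{p-1}$) and the identity $(p-1)\Tr[(Z-X)Z^{p-1}]-\Breg_p(X,Z)=\Tr[X(Z^{p-1}-X^{p-1})]\ge0$.
\end{itemize}
This is exactly where $\alpha\ge\frac12$ enters.

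Your suggested route is unlikely to replace this. The paper uses that $\tau^\alpha$ lies on the segment from $X$ to $Z$ and that the transported operator is $Z^{p-1}$ itself. An argument using only $\tau^\alpha\ge a\sigma^\alpha$ and positivity of the weight would amount, already for $d_{\A}=1$, to the operator inequality $p\,\mathrm D^{-1}_{p,W}(X^p)\le a^{1-p}X$ for all $W\ge aX$. This fails for $p=2$, $a=1$, $X=|0\rangle\!\langle0|$, $W=X+|v\rangle\!\langle v|$ with $\langle1|v\rangle\langle v|0\rangle\neq0$. There, $Y\le X$ would force $Y=y|0\rangle\!\langle0|$, which cannot solve $WY+YW=2X^2$.

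Finally, the algebra does not collapse on its own. The two estimates give $Q_\alpha(\rho)\le\kappa\,Q_\alpha(\sigma)+r^{-\frac{1-\alpha}{\alpha}}\bigl([a+(D-1)b]^{1/\alpha}-\kappa\bigr)$ with $\kappa=\bigl(\frac{a+(D-1)b}{a}\bigr)^{\frac{1-\alpha}{\alpha}}$. To absorb the additive term you need the range bound $Q_\alpha(\sigma)\ge r^{-\frac{1-\alpha}{\alpha}}$, which follows from Schmidt-rank domination and operator monotonicity of $x^{(1-\alpha)/\alpha}$. You also need a regularization step, since $\tau>0$ requires $\sigma_{\B}$ to be faithful.
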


Let $\rho$ be a state and $\sigma\geq 0$ be a positive semidefinite operator.
For $\alpha\in[\frac12,1)$, the sandwiched R\'enyi divergence \cite{MDS+13, WWY14} is
\begin{equation}
 \sand D_{\alpha}(\rho\|\sigma)
 :=\frac{1}{\alpha-1}
 \log\Tr\!\left[\left(
 \sigma^{\frac{1-\alpha}{2\alpha}}
 \rho
 \sigma^{\frac{1-\alpha}{2\alpha}}
 \right)^{\alpha}\right],
 \label{eq:sand-div}
\end{equation}
with the standard support convention.  Its data-processing inequality holds
for every $\alpha\geq\frac12$
\cite{MDS+13,FL13,Bei13}.

For $X_{\A\B}\geq0$, define the sandwiched R\'enyi functional
\begin{equation}
 \sand Q_{\alpha}(X_{\A\B})
 :=
 \left(
 \sup_{\tilde{\sigma}_{\B}\in\states{\B}}
 \Tr\!\left[\left(
 (\I_{\A}\otimes\tilde{\sigma}_{\B}^{\frac{1-\alpha}{2\alpha}})
 X_{\A\B}
 (\I_{\A}\otimes\tilde{\sigma}_{\B}^{\frac{1-\alpha}{2\alpha}})
 \right)^{\alpha}\right]
 \right)^{1/\alpha}.
 \label{eq:Q}
\end{equation}
For a state $\rho_{\A\B}$, the sandwiched R\'enyi conditional entropy is
\begin{equation}
 \sand H_{\alpha}^{\uparrow}(\A|\B)_{\rho}
 :=\frac{\alpha}{1-\alpha}\log\sand Q_{\alpha}(\rho_{\A\B}).
 \label{eq:H-Q}
\end{equation}

\begin{theorem}[Sandwiched sharp continuity bound]\label{thm:sand_main}
	Let $\alpha\in[\frac12,1)$ and let $\rho_{\A\B},\sigma_{\A\B}$ be states
	satisfying $T(\rho,\sigma)\leq\delta$.
	Let
	$D=d_{\A}\min\{d_{\A},d_{\B}\}$ be the effective dimension.  Then
	\begin{equation}\label{eq:sand_main-bound-piecewise}
		\boxed{\;\;
			\abs{\sand H_\alpha^\uparrow(\A|\B)_\rho
				-\sand H_\alpha^\uparrow(\A|\B)_\sigma}
			\leq
			\Gamma_{\alpha,D}(\delta),
            \;\;}
	\end{equation}
    where $\Gamma_{\alpha,D}$ is defined in \cref{eq:Gamma}.
	The right-hand side is optimal for every distance constraint.
\end{theorem}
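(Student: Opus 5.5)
We follow the same comparison-point geometry as the Petz proof, with $Q_\alpha$ replaced by $\sand Q_\alpha$. First, two structural facts about $\sand Q_\alpha$. It is positively homogeneous of degree one. It is concave on the positive cone: for fixed $\tilde\sigma_{\B}$, the map $X\mapsto\Tr[(\Gamma X\Gamma)^\alpha]^{1/\alpha}$ is concave for $\alpha\in[\frac12,1)$, and concavity survives the supremum once we rewrite it as a joint concave/infimum expression. Second, by Sion's minimax, the supremum is attained at a unique optimizer $\sigma^\star_{\B}(X)$ when $X$ has full support on the relevant subspace. This optimizer satisfies the fixed-point equation of \cite{HT14}, namely $\sigma^\star_{\B}\propto\bigl(\Tr_{\A}[(\Gamma X\Gamma)^\alpha]\bigr)$ up to the appropriate power, where $\Gamma=\I_{\A}\otimes(\sigma^\star_{\B})^{\frac{1-\alpha}{2\alpha}}$. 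Danskin's theorem \cite{Danskin1967} then identifies the Hilbert--Schmidt gradient as
\[
\nabla\sand Q_\alpha(X)=c(X)\,\Gamma\,(\Gamma X\Gamma)^{\alpha-1}\,\Gamma,
\]
with a normalization $c(X)$. Euler's relation $\Tr[\nabla\sand Q_\alpha(X)X]=\sand Q_\alpha(X)$ holds, and concavity gives the supporting-hyperplane inequality $\sand Q_\alpha(\rho)\le\Tr[\nabla\sand Q_\alpha(\tau)\rho]$ for every state $\rho$ (cf.\ \cref{eq:concave_envelope}).

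Next we choose the comparison point. Since $H$ is symmetric in $(\rho,\sigma)$, it suffices to bound $\sand H(\rho)-\sand H(\sigma)$ from above; exchanging roles gives the other direction, and for $\delta>1-D^{-1}$ we use the global range $[-\log r,\log d_{\A}]$ (after the effective-dimension reduction), which has width $\log D$. We take the anchor $\sigma_{\A\B}$ with its optimizer $\omega_{\B}:=\sigma^\star_{\B}(\sigma)$, and set $\Gamma:=\I_{\A}\otimes\omega_{\B}^{\frac{1-\alpha}{2\alpha}}$ and $Y:=(\Gamma\sigma\Gamma)^\alpha$. Applying Schmidt-rank domination \cite{terhal2000schmidt} to $Y$ gives $r\,\I_{\A}\otimes\Tr_{\A}Y\ge Y$. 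We then build the comparison operator as a weighted combination $aY+b(r\I_{\A}\otimes\Tr_{\A}Y-Y)$, with the isotropic weights $a,b$ from the Petz section ($a\propto(1-\varepsilon)^{\alpha}$, $b\propto(\varepsilon/(D-1))^{\alpha}$ after taking $\alpha$-th roots, pulled back by $\Gamma^{-1}$). We fix $\tau$ so that $\Gamma\tau\Gamma$ has $\alpha$-power equal to this combination. For $\sigma=\Phi_r$ this choice reproduces the extremal isotropic state $\rho_\delta$.

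The linearization then splits into two terms. For the change-of-measure term, $\Tr[G(\rho-\sigma)]\le\delta(\lambda_{\max}(G)-\lambda_{\min}(G))$ on states, where $G=\nabla\sand Q_\alpha(\tau)$. The anchor term $\Tr[G\sigma]$ requires a noncommutative calibration estimate: we must compare $\Tr[\Gamma(\Gamma\tau\Gamma)^{\alpha-1}\Gamma\sigma]$ with the scalar isotropic value. Because $\Gamma\tau\Gamma$ need not commute with $\Gamma\sigma\Gamma$, we reduce it, via operator concavity/monotonicity of $t\mapsto t^{(\alpha-1)/\alpha}$-type maps (valid precisely for $\alpha\ge\frac12$), to the same scalar power inequality used in the Petz calibration lemma of \cref{sec:transport}. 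This is the step I expect to be the main obstacle, and I would first try to transfer the Petz estimate directly by working in the sandwiched frame $\Gamma\cdot\Gamma$. Combining the two terms should give $\sand Q_\alpha(\rho)\le\sand Q_\alpha(\sigma)\,[(1-\varepsilon)^\alpha+(D-1)^{1-\alpha}\varepsilon^\alpha]^{1/\alpha}$, and taking $\frac{\alpha}{1-\alpha}\log$ yields $\Gamma_{\alpha,D}(\delta)$.

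For optimality, we evaluate on the pair \cref{eq:introduction-equality-pair}. Both states are invariant under $U\otimes\bar U$, so the optimizer is $P_{\B_0}/r$ and sandwiched and Petz quantities coincide. Direct computation gives $\sand H(\Phi_r)=-\log r$ and $\sand H(\rho_\delta)=-\log r+\Gamma_{\alpha,D}(\delta)$; for $\delta>1-D^{-1}$ the endpoint pair remains admissible and attains $\log D$.
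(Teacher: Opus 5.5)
Your skeleton (concavity and homogeneity, the Danskin gradient, linearization at $\tau$, trace-distance duality) and your comparison point $(\Gamma\tau\Gamma)^\alpha=aY+b(r\I_{\A}\otimes\Tr_{\A}Y-Y)$ are exactly the paper's. The gap is that the proposal never proves the three facts about $G=\nabla\sand Q_\alpha(\tau)$ on which the argument closes.

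(i) You write $G\propto\Gamma(\Gamma\tau\Gamma)^{\alpha-1}\Gamma$ with the \emph{anchor's} frame $\Gamma$. This presupposes that $\omega_{\B}$ is also the optimizer at $\tau$, which is the key structural step and must be shown. Set $W:=(\Gamma\tau\Gamma)^\alpha$ and $Y_{\B}:=\Tr_{\A}Y$. Then $\Tr_{\A}W=[a+(D-1)b]Y_{\B}$ and $\Tr W=[a+(D-1)b]\Tr Y$. Hence the exact fixed-point equation $\omega_{\B}=\Tr_{\A}[(\Gamma X\Gamma)^\alpha]/\Tr[(\Gamma X\Gamma)^\alpha]$ (no extra power, contrary to your hedge) still holds at $X=\tau$. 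You then need that this equation is \emph{sufficient} for optimality, which is what the characterization in \cite{HT14} provides; Sion's minimax gives neither uniqueness nor sufficiency.

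(ii) You never bound $\lambda_{\max}(G)-\lambda_{\min}(G)$. The fixed-point identity replaces $\Gamma$ by a multiple of $\I_{\A}\otimes Y_{\B}^{\frac{1-\alpha}{2\alpha}}$, giving $G=[a+(D-1)b]^{\frac{1-\alpha}{\alpha}}(\I_{\A}\otimes Y_{\B}^{\frac{1-\alpha}{2\alpha}})W^{-\frac{1-\alpha}{\alpha}}(\I_{\A}\otimes Y_{\B}^{\frac{1-\alpha}{2\alpha}})$. Schmidt domination and $a\ge b$ give $rb\,\I_{\A}\otimes Y_{\B}\le W\le ra\,\I_{\A}\otimes Y_{\B}$. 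Anti-monotonicity of $t\mapsto t^{-\frac{1-\alpha}{\alpha}}$, which is where $\alpha\ge\frac12$ enters, then yields the scalar window $\bigl[\frac{a+(D-1)b}{ra}\bigr]^{\frac{1-\alpha}{\alpha}}\I\le G\le\bigl[\frac{a+(D-1)b}{rb}\bigr]^{\frac{1-\alpha}{\alpha}}\I$.

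(iii) You leave the anchor calibration open and propose to port the Petz one-ray transport theorem. That aims at the wrong difficulty, because the sandwiched gradient contains no inverse Fr\'echet derivative. Since $\Gamma\sigma\Gamma=Y^{1/\alpha}$ and $W\ge aY$ (Schmidt domination again), cyclicity and anti-monotonicity give in one line $\Tr[\Gamma W^{\frac{\alpha-1}{\alpha}}\Gamma\sigma]=\Tr[W^{-\frac{1-\alpha}{\alpha}}Y^{1/\alpha}]\le a^{-\frac{1-\alpha}{\alpha}}\Tr Y$. That is, $\Tr[G\sigma]\le\bigl[\frac{a+(D-1)b}{a}\bigr]^{\frac{1-\alpha}{\alpha}}\sand Q_\alpha(\sigma)$.

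The final step, ``combining the two terms'', is also not automatic. The change-of-measure contribution is an absolute constant, while the target is multiplicative in $\sand Q_\alpha(\sigma)$. To absorb it you need the barycentric identity $a+(D-1)b=(1-\delta)a^{-\frac{1-\alpha}{\alpha}}+\delta b^{-\frac{1-\alpha}{\alpha}}$ together with the range bound $\sand Q_\alpha(\sigma)\ge r^{-\frac{1-\alpha}{\alpha}}$. You also need a common regularization to faithful states, since the optimizer and gradient steps require $\sigma>0$.

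Your concavity sketch is incomplete as written, because a supremum of concave functions need not be concave. Either use joint concavity in $(X,\tilde\sigma_{\B})$ for $\alpha\ge\frac12$ followed by a homogeneity argument for the $1/\alpha$ power, or use the paper's direct-sum plus data-processing argument.
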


Taking $\alpha\uparrow1$ gives an alternative proof of the sharp order-one
bound.  Indeed, by letting $\varepsilon:=\min\{\delta,1-D^{-1}\}$, we have
\begin{equation}\label{eq:order-one-limit}
	\lim_{\alpha\uparrow1}
 \frac1{1-\alpha}
 \log\!\left[(1-\varepsilon)^\alpha
 +(D-1)^{1-\alpha}\varepsilon^\alpha\right]
 =h_2(\varepsilon)+\varepsilon\log(D-1).
\end{equation}
Moreover, for every state $\omega_{\A\B}$,
\begin{align}
    H_\alpha^\uparrow(\A|\B)_\omega
    &\to H(\A|\B)_{\omega} := - \Tr[ \omega_{\A\B} \log \omega_{\A\B}] + \Tr[ \omega_{\B} \log \omega_{\B}],
    \\
    \sand H_\alpha^\uparrow(\A|\B)_\omega
    &\to H(\A|\B)_{\omega}.
\end{align}
Hence
\begin{equation}\label{eq:order-one-bound}
	\abs{H(\A|\B)_\rho-H(\A|\B)_\sigma}
	\leq
	h_2(\varepsilon)+\varepsilon\log(D-1).
\end{equation}
This recovers the corresponding sharp order-one result of
\cite{BertaEtAl2026}.

\subsection{Elementary ingredients}

Let $0<\alpha<1$ and let $\tau_{\A\B}>0$. 
Let $\nabla Q_{\alpha}(\tau)$ be the Hilbert--Schmidt gradient of $Q_\alpha$ at $\tau$; i.e., $\nabla Q_{\alpha}(\tau)$ is a Hermitian operator such that 
\begin{align}\label{eq:general-gradient}
	\left.\frac{\dd}{\dd t} Q_{\alpha}(\tau + t H)\right|_{t=0}
	= \Tr\left[ \nabla Q_{\alpha}(\tau) H \right],
	\quad	\forall\, \tau>0, H=H^\dagger.
\end{align}
Since $\nabla Q_{\alpha}(\tau)$ draws a tangent plane that supports the concave Petz functional $Q_{\alpha}$ from above, we hence call $\nabla Q_{\alpha}(\tau)$ supporting operator and $\tau$ the comparison point.

We also denote by $\nabla\sand Q_{\alpha}(\tau)$ the supporting operator
of the sandwiched functional at $\tau$ similarly.

For $p>0$ and $X>0$, denote the Fr\'echet derivative of the power map by
\begin{equation}\label{eq:Frechet-derivative}
	\mathrm D_{p,X}(H)
	:=\left.\frac{\dd}{\dd t}(X+tH)^p\right|_{t=0}.
\end{equation}
It is Hilbert--Schmidt self-adjoint and invertible. For $p>1$, its inverse is positive,
because it is the derivative of the operator-monotone map
$Z\mapsto Z^{1/p}$ at $X^p$.  We use
\begin{equation}\label{eq:Frechet-identities}
	\mathrm D_{p,X}(\I)=pX^{p-1},
	\qquad
	\mathrm D_{p,X}(X)=pX^p,
	\qquad
	\mathrm D_{p,cX}=c^{p-1}\mathrm D_{p,X}
	\quad(c>0).
\end{equation}
These are standard facts from matrix-function calculus; see
\cite{Bhatia1997,Higham2008}.

\begin{lemma}[Supporting operator of $Q_\alpha$]\label{lem:supporting-operator}
Let $0<\alpha<1$ and let $\tau_{\A\B}>0$. 
The supporting operator defined in \cref{eq:general-gradient} satisfies
\begin{equation}\label{eq:general-gradient-formula}
\nabla Q_{\alpha}(\tau)
 =\frac1\alpha\,
 \mathrm D_{\alpha,\tau}
 \!\left[
 \I_{\A}\otimes
 \left(\Tr_{\A}[\tau_{\A\B}^{\alpha}]\right)^{\frac{1-\alpha}{\alpha}}
 \right].
\end{equation}
Moreover,
\begin{equation}\label{eq:supporting-score}
 Q_\alpha(X_{\A\B})\leq\Tr\!\left[\nabla Q_{\alpha}(\tau) X_{\A\B}\right]
 \quad(X_{\A\B}\geq0),
 \qquad
 Q_\alpha(\tau)=\Tr\!\left[ \nabla Q_{\alpha}(\tau) \tau\right].
\end{equation}
\end{lemma}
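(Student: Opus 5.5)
The plan is to compute the directional derivative of $Q_\alpha$ at $\tau>0$ with the chain rule, and then read off both inequalities in \cref{eq:supporting-score} from homogeneity and concavity.

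\textbf{Gradient formula.} Write $Y(t):=\Tr_{\A}[(\tau+tH)^\alpha]$. Since $\tau>0$, the map $t\mapsto(\tau+tH)^\alpha$ is smooth near $t=0$, with derivative $\mathrm D_{\alpha,\tau}(H)$. Hence $Y'(0)=\Tr_{\A}[\mathrm D_{\alpha,\tau}(H)]$.

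Next, $Y(0)=\Tr_{\A}[\tau^\alpha]>0$. Indeed, $\tau^\alpha\geq\lambda_{\min}(\tau)^\alpha\I$, so $Y(0)\geq d_{\A}\lambda_{\min}(\tau)^\alpha\I_{\B}$. The map $Z\mapsto\Tr Z^{1/\alpha}$ is therefore differentiable at $Y(0)$. The standard trace identity $\frac{\dd}{\dd t}\Tr f(Z+tK)=\Tr[f'(Z)K]$ gives derivative $\frac1\alpha\Tr[Y(0)^{\frac{1-\alpha}{\alpha}}K]$.

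Combining the two steps,
\[
\frac{\dd}{\dd t}Q_\alpha(\tau+tH)\Big|_{t=0}=\frac1\alpha\Tr\!\left[\left(\I_{\A}\otimes Y(0)^{\frac{1-\alpha}{\alpha}}\right)\mathrm D_{\alpha,\tau}(H)\right],
\]
where we used $\Tr[W_{\B}\Tr_{\A}Z]=\Tr[(\I_{\A}\otimes W_{\B})Z]$. The Fréchet derivative $\mathrm D_{\alpha,\tau}$ is Hilbert--Schmidt self-adjoint: in the eigenbasis of $\tau$ it acts as a Schur multiplier with the real symmetric divided-difference kernel $f^{[1]}(\lambda_i,\lambda_j)$. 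Moving it onto the other factor yields \cref{eq:general-gradient-formula}. The resulting operator is Hermitian, because $\mathrm D_{\alpha,\tau}$ maps Hermitian operators to Hermitian operators.

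\textbf{Euler identity.} $Q_\alpha$ is positively homogeneous of degree one. This is visible directly: $(cX)^\alpha=c^\alpha X^\alpha$, and raising the partial trace to the power $1/\alpha$ returns a factor $c$. Differentiating $Q_\alpha((1+t)\tau)=(1+t)Q_\alpha(\tau)$ at $t=0$ gives $\Tr[\nabla Q_\alpha(\tau)\tau]=Q_\alpha(\tau)$.

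The same identity follows from \cref{eq:Frechet-identities}. Self-adjointness gives $\Tr[\mathrm D_{\alpha,\tau}(M)\tau]=\Tr[M\,\mathrm D_{\alpha,\tau}(\tau)]=\alpha\Tr[M\tau^\alpha]$, and with $M=\I_{\A}\otimes Y(0)^{\frac{1-\alpha}{\alpha}}$ this equals $\alpha\Tr Y(0)^{1/\alpha}$.

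\textbf{Supporting inequality.} Take $X\geq0$ and set $g(t):=Q_\alpha(\tau+t(X-\tau))$ for $t\in[0,1]$. Every point $\tau+t(X-\tau)=(1-t)\tau+tX$ lies in the positive cone, so $g$ is concave by the Epstein concavity quoted above. It is differentiable at $t=0$ from the right, since $\tau>0$ is interior.

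Concavity gives $g(1)\leq g(0)+g'(0^+)$, that is,
\[
Q_\alpha(X)\leq Q_\alpha(\tau)+\Tr[\nabla Q_\alpha(\tau)(X-\tau)]=\Tr[\nabla Q_\alpha(\tau)X],
\]
where the last step uses the Euler identity. Continuity of $Q_\alpha$ on the closed cone handles singular $X$, though this is not even needed, because the convex combination argument applies directly.

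\textbf{Main obstacle.} The only delicate point is justifying the chain rule. This requires strict positivity, both of $\tau$ (so the power map is differentiable) and of $Y(0)$ (so $Z\mapsto\Tr Z^{1/\alpha}$ is differentiable there). Both follow from $\tau>0$ as shown above. The self-adjointness of $\mathrm D_{\alpha,\tau}$, taken from the Daleckii--Krein formula, is the other ingredient used.
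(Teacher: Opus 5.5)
Your proposal is correct and follows essentially the same route as the paper: the chain rule plus Hilbert--Schmidt self-adjointness of $\mathrm D_{\alpha,\tau}$ gives the gradient formula, degree-one homogeneity gives the Euler identity, and concavity gives the tangent-plane inequality. Your added justifications fill in details the paper leaves implicit: strict positivity of $\Tr_{\A}[\tau^\alpha]$, the symmetric Daleckii--Krein kernel, and the one-sided derivative of the concave function along the segment from $\tau$ to $X$.
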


\begin{proof}
For every Hermitian \(H_{\A\B}\), differentiating
\(
Q_\alpha(X)=\Tr[(\Tr_{\A}[X^\alpha])^{1/\alpha}]
\)
at \(X=\tau\) gives
\begin{align}
 \mathrm DQ_\alpha(\tau)[H]
 &=
 \frac1\alpha
 \Tr\!\left[
 \left(\Tr_{\A}[\tau^\alpha]\right)^{\frac{1-\alpha}{\alpha}}
 \Tr_{\A}\!\left[\mathrm D_{\alpha,\tau}(H)\right]
 \right]
 \notag\\
 &=
 \frac1\alpha
 \Tr\!\left[
 \left(\I_{\A}\otimes
  \left(\Tr_{\A}[\tau^\alpha]\right)^{\frac{1-\alpha}{\alpha}}\right)
 \mathrm D_{\alpha,\tau}(H)
 \right]
 \notag\\
 &=
 \frac1\alpha
 \Tr\!\left[
 \mathrm D_{\alpha,\tau}\!\left(
 \I_{\A}\otimes
  \left(\Tr_{\A}[\tau^\alpha]\right)^{\frac{1-\alpha}{\alpha}}
 \right)H
 \right].
 \label{eq:gradient-calculation-general}
\end{align}
The third equality is from the Hilbert--Schmidt self-adjointness of
\(\mathrm D_{\alpha,\tau}\).
Since this holds for every Hermitian $H_{\A\B}$, we have $\nabla Q_{\alpha}(\tau)
 =\frac1\alpha\,
 \mathrm D_{\alpha,\tau}
 \!\left[
 \I_{\A}\otimes
 \left(\Tr_{\A}[\tau_{\A\B}^{\alpha}]\right)^{\frac{1-\alpha}{\alpha}}
 \right]$.
 
Concavity therefore yields
\[
 Q_\alpha(X)\leq Q_\alpha(\tau)+\Tr[\nabla Q_{\alpha}(\tau)(X-\tau)].
\]
Finally, Euler's theorem for the degree-one homogeneous functional $Q_\alpha$
gives
\[
 \Tr[\nabla Q_{\alpha}(\tau)\tau]
 =\left.\frac{\dd}{\dd t}Q_\alpha((1+t)\tau)\right|_{t=0}
 =Q_\alpha(\tau),
\]
which proves both statements in \cref{eq:supporting-score}.
\end{proof}

\begin{lemma}[Concavity of $\sand Q_{\alpha}$]
\label{lem:concavity}
Let $\alpha\in[\frac12,1)$.  The functional $\sand Q_{\alpha}$ in
\cref{eq:Q} is continuous, positively homogeneous of degree one, and
concave on the positive cone.
\end{lemma}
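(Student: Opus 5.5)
We prove the three properties in the order homogeneity, continuity, concavity. The plan is to rewrite $\sand Q_\alpha$ through a variational formula in which concavity in $X$ is manifest. We only need the statement for $\alpha\in[\frac12,1)$, which is where the relevant trace functionals are known to be jointly concave.

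\emph{Homogeneity and continuity.} Positive homogeneity is immediate. For $c>0$, the inner trace in \cref{eq:Q} scales as $c^\alpha$, and taking the supremum and then the $1/\alpha$-power gives $\sand Q_\alpha(cX)=c\,\sand Q_\alpha(X)$.

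For continuity, set $\gamma:=\frac{1-\alpha}{\alpha}\in(0,1]$ and
\[
f(X,\tilde\sigma):=\Tr\bigl[(\tilde\sigma^{\gamma/2}X\tilde\sigma^{\gamma/2})^\alpha\bigr].
\]
This map is jointly continuous on $\{X\geq0\}\times\states{\B}$, because $\tilde\sigma\mapsto\tilde\sigma^{\gamma/2}$ is continuous on positive semidefinite matrices (with $\gamma>0$) and $Y\mapsto\Tr Y^\alpha$ is continuous. The state space $\states{\B}$ is compact. A supremum of a jointly continuous function over a compact set is continuous in the remaining variable, so $\sand Q_\alpha$ is continuous.

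\emph{Concavity.} This step carries the real content. The first route is to use Lieb--Thirring/H\"older duality for $\alpha<1$. For $Y\geq0$ and $\alpha\in(0,1)$ there is a reverse-H\"older variational formula
\[
\Tr Y^\alpha=\inf_{Z>0}\bigl\{\alpha\Tr[YZ]+(1-\alpha)\Tr Z^{-\alpha/(1-\alpha)}\bigr\}.
\]
With $Y=\tilde\sigma^{\gamma/2}X\tilde\sigma^{\gamma/2}$ this expresses $f(X,\tilde\sigma)$ as an infimum of functions affine in $X$. Hence $f(\cdot,\tilde\sigma)$ is concave in $X$, and so is its $1/\alpha$-power after optimization. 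The catch is that a supremum of concave functions need not be concave, so this alone is not enough.

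The second route, which I would actually use, goes through the conditional-entropy dual. By the Hayashi--Tomamichel duality (cf.\ \cite{HT14}), the optimized sandwiched conditional entropy of $X_{\A\B}$ equals minus the non-optimized Petz-type quantity of a purification, up to reparametrization $\beta=1/\alpha\in(1,2]$. Instead of relying on that, I would prove concavity directly via the "rescaled $\alpha$-norm" representation
\[
\sand Q_\alpha(X)=\sup_{\tilde\sigma}\bigl\|\tilde\sigma^{\gamma/2}X\tilde\sigma^{\gamma/2}\bigr\|_\alpha .
\]
Substituting $\tilde\sigma^{\gamma}=\omega^{\gamma}/\|\omega^{\gamma}\|$ with suitable normalizations, this becomes the perspective-type expression $\sup_{\omega_\B}\Tr[(\omega^{\gamma/2}X\omega^{\gamma/2})^\alpha]^{1/\alpha}$ minus a penalty, i.e.\ a Legendre-type formula. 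For $\alpha\geq\frac12$ the map
\[
(X,\omega)\mapsto\Tr\bigl[(\omega^{\gamma/2}X\omega^{\gamma/2})^\alpha\bigr]
\]
is \emph{jointly} concave. This is exactly the regime $\gamma\le 1$, $\alpha\in[\frac12,1)$ covered by Frank--Lieb \cite{FL13} and Carlen--Frank--Lieb \cite{CarlenFrankLieb2016}: jointly concave when $0<\alpha<1$ and $p=\gamma$, $q$ with $p+q\le1$, with $q=\alpha$. Given joint concavity in $(X,\omega)$ over the convex set $\states{\B}$, the partial supremum over $\omega$ of a jointly concave function is concave in $X$.

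This yields concavity of $X\mapsto\sand Q_\alpha(X)^\alpha$. Combining it with homogeneity of degree one finishes the proof: a nonnegative, degree-$\alpha$ homogeneous concave function $g$ has $g^{1/\alpha}$ concave. This holds because the superlevel sets of $g^{1/\alpha}$ are convex and homogeneous of degree one, so $g^{1/\alpha}$ is a gauge-like concave function.

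The main obstacle is establishing joint concavity. One must match exponents carefully, since $\Tr(B^{p/2}A^{q}B^{p/2})^{s}$ with $A=X$ appearing to the first power inside an outer power $\alpha$ needs the joint concavity theorem in the precise exponent range $\alpha\geq\frac12$. I would first try to quote \cite{CarlenFrankLieb2016} directly. Failing that, I would derive it from the data-processing inequality for $\sand D_\alpha$ at $\alpha\ge\frac12$ \cite{FL13,Bei13} via the standard block-diagonal (classical flag) embedding, which converts joint convexity of $\exp((\alpha-1)\sand D_\alpha)$ into the needed joint concavity.
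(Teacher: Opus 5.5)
Your proof is correct, but it reaches concavity by a different route from the paper.

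\textbf{The paper's route.} The paper never works with the unoptimized trace functional. It uses the classical direct-sum identity $\sand Q_\alpha(X_0\oplus X_1)=\sand Q_\alpha(X_0)+\sand Q_\alpha(X_1)$, where $\oplus$ runs along a classical register in the conditioning system. It applies this to $\lambda X\oplus(1-\lambda)Y$ and then discards the register using data processing of $\sand H^\uparrow_\alpha$. This gives superadditivity, and with homogeneity, concavity. Everything happens directly at the level of the degree-one functional, so no power-lifting step is needed.

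\textbf{Your route.} You take the Frank--Lieb joint concavity of $(X,\Omega)\mapsto\Tr[(\Omega^{\gamma/2}X\Omega^{\gamma/2})^\alpha]$ and compose it with the affine map $\tilde\sigma\mapsto\I_{\A}\otimes\tilde\sigma$. Taking the partial supremum over the convex set $\states{\B}$ gives concavity of $\sand Q_\alpha^{\,\alpha}$. You then lift to $\sand Q_\alpha$ using the fact that $g^{1/\alpha}$ is concave whenever $g\geq0$ is concave and positively $\alpha$-homogeneous. That lifting fact is correct: apply concavity of $g$ to the normalized points $x/g(x)^{1/\alpha}$ and $y/g(y)^{1/\alpha}$, and use a limiting argument when one of the values vanishes.

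\textbf{Comparison.} Both proofs rest on the same deep input, since joint concavity for $\alpha\geq\frac12$ is essentially equivalent to data processing. Your version avoids the direct-sum formula, which needs a reduction to block-diagonal $\tilde\sigma$ and a H\"older optimization over the block weight. It also shows exactly where $\gamma\leq1$ (that is, $\alpha\geq\frac12$) enters. The cost is the extra homogeneity-lifting lemma. Your continuity and homogeneity arguments match the paper's, with more detail.

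\textbf{Things to fix.}
\begin{enumerate}
\item Your exponent matching is wrong as written. With $p=\gamma$ and $q=\alpha$, the condition $p+q\leq1$ holds only for $\alpha\geq1$. In the parametrization $\Tr[(B^{q/2}A^pB^{q/2})^s]$, the correct identification is $p=1$, $q=\gamma$, $s=\alpha=1/(p+q)$. This lies in the concavity range $0\leq p,q\leq1$, $0<s\leq1/(p+q)$.
\item More simply, you can cite \cite{FL13} directly. It states exactly the joint concavity of $(\rho,\sigma)\mapsto\Tr[(\sigma^{\frac{1-\alpha}{2\alpha}}\rho\,\sigma^{\frac{1-\alpha}{2\alpha}})^\alpha]$ for $\alpha\in[\frac12,1)$.
\item Drop the sentence about substituting $\tilde\sigma^\gamma=\omega^\gamma/\|\omega^\gamma\|$ with a ``penalty'' or Legendre-type formula. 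The supremum in the definition is already over a convex set, so no reparametrization is needed.
\item In your fallback argument, data processing plus the flag embedding yields joint \emph{concavity}, not convexity, of $\exp((\alpha-1)\sand D_\alpha)$ for $\alpha<1$.
\end{enumerate}
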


\begin{proof}
Homogeneity and continuity follow from \cref{eq:Q} and compactness of the
state space.  
The following direct-sum property of $\sand Q_{\alpha}$ was proved in  \cite{MDS+13, TBH14}:
\begin{equation}
 \sand Q_{\alpha}(X_0\oplus X_1)
 =\sand Q_{\alpha}(X_0)+\sand Q_{\alpha}(X_1).
 \label{eq:sand-direct-sum}
\end{equation}


Apply \cref{eq:sand-direct-sum} to
$\lambda X\oplus(1-\lambda)Y$ and discard the classical conditioning
register.  Sandwiched conditional-entropy data processing gives
\[
 \sand Q_{\alpha}(\lambda X+(1-\lambda)Y)
 \geq
 \lambda\sand Q_{\alpha}(X)
 +(1-\lambda)\sand Q_{\alpha}(Y),
\]
which proves concavity.
\end{proof}

\begin{lemma}[Sandwiched optimizer and supporting operator]
\label{lem:sand_supporting}
Let $\alpha\in[\frac12,1)$ and let $X_{\A\B}>0$.
There is a unique optimizer
$\sand\sigma_{\B}>0$ in \cref{eq:Q}.  
This optimizer is uniquely characterized by the fixed-point equation \cite{HT14, rubboli2026quantum}
\begin{align}
\sand\sigma_{\B}
=\frac{
\Tr_{\A}\!\left[\left(
(\I_{\A}\otimes\sand\sigma_{\B}^{\frac{1-\alpha}{2\alpha}})
X_{\A\B}
(\I_{\A}\otimes\sand\sigma_{\B}^{\frac{1-\alpha}{2\alpha}})
\right)^\alpha\right]}
{
\Tr\!\left[\left(
(\I_{\A}\otimes\sand\sigma_{\B}^{\frac{1-\alpha}{2\alpha}})
X_{\A\B}
(\I_{\A}\otimes\sand\sigma_{\B}^{\frac{1-\alpha}{2\alpha}})
\right)^\alpha\right]}.
\label{eq:fixed-point}
\end{align}
Moreover, $\sand Q_{\alpha}$ is differentiable at $X$ and
\begin{align}
&\nabla\sand Q_{\alpha}(X)
\notag\\[-1mm]
&=
\left(
\I_{\A}\otimes
\left\{
\Tr_{\A}\!\left[\left(
(\I_{\A}\otimes\sand\sigma_{\B}^{\frac{1-\alpha}{2\alpha}})
X_{\A\B}
(\I_{\A}\otimes\sand\sigma_{\B}^{\frac{1-\alpha}{2\alpha}})
\right)^\alpha\right]
\right\}^{\frac{1-\alpha}{2\alpha}}
\right)
\notag\\
&\quad\times
\left[
(\I_{\A}\otimes\sand\sigma_{\B}^{\frac{1-\alpha}{2\alpha}})
X
(\I_{\A}\otimes\sand\sigma_{\B}^{\frac{1-\alpha}{2\alpha}})
\right]^{-(1-\alpha)}
\notag\\
&\quad\times
\left(
\I_{\A}\otimes
\left\{
\Tr_{\A}\!\left[\left(
(\I_{\A}\otimes\sand\sigma_{\B}^{\frac{1-\alpha}{2\alpha}})
X_{\A\B}
(\I_{\A}\otimes\sand\sigma_{\B}^{\frac{1-\alpha}{2\alpha}})
\right)^\alpha\right]
\right\}^{\frac{1-\alpha}{2\alpha}}
\right).
\label{eq:gradient}
\end{align}
Consequently, for every $\tau_{\A\B}>0$ and $X_{\A\B}\geq0$,
\begin{equation}
 \sand Q_{\alpha}(X_{\A\B})
 \leq
 \Tr\!\left[\nabla\sand Q_{\alpha}(\tau)X_{\A\B}\right],
 \label{eq:sand-support}
\end{equation}
and equality holds at $X_{\A\B}=\tau_{\A\B}$.
\end{lemma}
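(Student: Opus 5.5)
We prove the three assertions of \cref{lem:sand_supporting} in turn: existence and uniqueness of the optimizer with its fixed-point characterization, then differentiability via Danskin's theorem, then the supporting inequality from concavity.

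\emph{Step 1: the optimizer.} Write $\gamma:=\frac{1-\alpha}{\alpha}$ and
\[
F(\tilde\sigma):=\Tr\big[(\I_{\A}\otimes\tilde\sigma^{\gamma/2})X(\I_{\A}\otimes\tilde\sigma^{\gamma/2})\big]^{\alpha}.
\]
We first show that $F$ is strictly concave on $\states{\B}$. By unitary invariance of the trace, $F(\tilde\sigma)=\Tr\big[(X^{1/2}(\I\otimes\tilde\sigma^{\gamma})X^{1/2})^{\alpha}\big]$.

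For $\alpha\in[\frac12,1)$ we have $\gamma\in(0,1]$, so $\tilde\sigma\mapsto\tilde\sigma^\gamma$ is operator concave. The map $Y\mapsto\Tr[(X^{1/2}YX^{1/2})^\alpha]$ is monotone and concave, because $t^\alpha$ is operator concave and monotone. Hence $F$ is concave. Strict concavity follows since $X>0$ makes $Y\mapsto X^{1/2}YX^{1/2}$ injective, and $t^\alpha$ is strictly concave.

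$F$ is continuous on the compact set $\states{\B}$, so a maximizer exists, and strict concavity makes it unique.

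Next, the optimizer is full rank. Near the boundary, the derivative of $\tilde\sigma^\gamma$ blows up in directions that increase a vanishing eigenvalue, because $\gamma<1$ (or $\gamma=1$ with $X>0$). Concretely, at a singular $\tilde\sigma$ the map $s\mapsto F((1-s)\tilde\sigma+s\I/d_{\B})$ has infinite or strictly positive right derivative, so such a $\tilde\sigma$ cannot be a maximizer.

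Finally, we derive \cref{eq:fixed-point}. Writing Lagrange stationarity of $F$ on the affine hull of states at an interior point gives
\[
\Tr_{\A}\big[\tilde\sigma^{\gamma/2}(\cdots)\tilde\sigma^{\gamma/2}\big]\propto\text{const}\cdot\I.
\]
Rather than carry out this computation, we cite \cite{HT14,rubboli2026quantum}: the stationarity condition there is equivalent to \cref{eq:fixed-point}, and by strict concavity any solution of the fixed-point equation is the maximizer.

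\emph{Step 2: differentiability and the gradient.} The map $(X,\tilde\sigma)\mapsto F_X(\tilde\sigma)$ is jointly $C^1$ near $(X,\sand\sigma_{\B})$, since everything is positive definite there. The maximizer is unique, and by Berge's theorem it depends continuously on $X$. Danskin's theorem \cite{Danskin1967} then gives that $X\mapsto\sup F_X$ is differentiable with derivative $\partial_X F_X(\sand\sigma_{\B})$.

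Set $M:=(\I\otimes\sand\sigma_{\B}^{\gamma/2})X(\I\otimes\sand\sigma_{\B}^{\gamma/2})$. Then
\[
\partial_X\Tr[M^\alpha][H]=\alpha\Tr\big[M^{\alpha-1}(\I\otimes\sand\sigma^{\gamma/2})H(\I\otimes\sand\sigma^{\gamma/2})\big].
\]
Applying the chain rule to the outer power $1/\alpha$ gives the factor $\frac1\alpha(\Tr M^\alpha)^{\frac1\alpha-1}$. The gradient is therefore
\[
(\Tr M^\alpha)^{\gamma}(\I\otimes\sand\sigma^{\gamma/2})M^{-(1-\alpha)}(\I\otimes\sand\sigma^{\gamma/2}).
\]
Now substitute the fixed-point equation $\sand\sigma=\Tr_{\A}[M^\alpha]/\Tr M^\alpha$. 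Since $\sand\sigma^{\gamma/2}=(\Tr_{\A}M^\alpha)^{\gamma/2}(\Tr M^\alpha)^{-\gamma/2}$, the scalar factors cancel exactly, leaving \cref{eq:gradient}.

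\emph{Step 3: the supporting inequality.} By \cref{lem:concavity}, $\sand Q_\alpha$ is concave and differentiable at $\tau>0$, so
\[
\sand Q_\alpha(X)\le\sand Q_\alpha(\tau)+\Tr[\nabla\sand Q_\alpha(\tau)(X-\tau)]
\]
for $X>0$. This extends to $X\ge0$ by continuity. Euler's identity for the degree-one homogeneous functional gives $\Tr[\nabla\sand Q_\alpha(\tau)\tau]=\sand Q_\alpha(\tau)$, which yields \cref{eq:sand-support} and equality at $X=\tau$.

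\emph{Main obstacle.} The delicate step is Step 1: ruling out boundary maximizers and justifying the fixed-point characterization. I would rely mainly on the cited fixed-point results and prove only the full-rank property via the one-sided derivative argument.
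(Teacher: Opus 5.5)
Your proposal is correct, and its skeleton is the paper's. Steps~2 and~3 match the paper essentially line by line: Danskin's theorem at the unique faithful optimizer, the chain rule for the outer power $1/\alpha$, cancellation of the scalar factors via \cref{eq:fixed-point}, and then \cref{lem:concavity} together with Euler's identity.

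The difference lies in Step~1. The paper proves nothing there. It applies \cite[Lemma~5]{HT14} to $X/\Tr[X]$ with reference operator $\I_{\A}$. That single citation delivers uniqueness, the necessary-and-sufficient fixed-point characterization, and $\supp(\widetilde\sigma_{\B})=\supp(X_{\B})$; the last also underlies \cref{remark:sand_support}. You instead argue as follows.
\begin{itemize}
\item Existence comes from compactness.
\item Uniqueness comes from strict concavity of $\tilde\sigma\mapsto\Tr[(X^{1/2}(\I_{\A}\otimes\tilde\sigma^{\gamma})X^{1/2})^{\alpha}]$. This makes transparent that $\alpha\geq\frac12$ enters exactly through operator concavity of $t^{\gamma}$.
\item Faithfulness comes from a boundary one-sided-derivative argument.
\end{itemize}
Only the identification with \cref{eq:fixed-point} is left to citation. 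Your explicit appeal to Berge's theorem also supplies the continuity of the argmax that the paper's local-$C^1$ use of Danskin's theorem tacitly needs. The paper's route is shorter and carries the support statement for singular $X$; yours is nearly self-contained.

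Two points need tightening.

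First, the claim that \emph{any} solution of the fixed-point equation is the maximizer is false as stated. Take a pure $\tilde\sigma_{\B}=|b\rangle\!\langle b|$ and set $X_b:=(\I_{\A}\otimes\langle b|)X(\I_{\A}\otimes|b\rangle)$. Then $M=X_b\otimes|b\rangle\!\langle b|$, so both sides of \cref{eq:fixed-point} equal $|b\rangle\!\langle b|$. The characterization holds only among faithful $\tilde\sigma_{\B}$. That restricted version is exactly what your interior-stationarity argument gives, and it is all that \cref{prop:sand_calibrated} uses.

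Second, you can dispense with the citation, but your displayed stationarity condition has the wrong form. Set $W:=X^{1/2}(\I_{\A}\otimes\tilde\sigma^{\gamma})X^{1/2}$, with $M$ as in your Step~2. One has $X^{1/2}W^{\alpha-1}X^{1/2}=(\I_{\A}\otimes\tilde\sigma^{-\gamma/2})M^{\alpha}(\I_{\A}\otimes\tilde\sigma^{-\gamma/2})$. Hence the gradient of $F$ at a faithful $\tilde\sigma$ is $\alpha\,\mathrm D_{\gamma,\tilde\sigma}\bigl(\tilde\sigma^{-\gamma/2}\Tr_{\A}[M^{\alpha}]\tilde\sigma^{-\gamma/2}\bigr)$. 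The map $\mathrm D_{\gamma,\tilde\sigma}$ is invertible and satisfies $\mathrm D_{\gamma,\tilde\sigma}(\tilde\sigma^{1-\gamma})=\gamma\I_{\B}$. So this gradient is proportional to $\I_{\B}$ if and only if $\Tr_{\A}[M^{\alpha}]\propto\tilde\sigma$, that is, if and only if \cref{eq:fixed-point} holds.

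A minor point on the full-rank step: the blow-up does not come from the derivative of $\tilde\sigma^{\gamma}$. It comes from $t^{\alpha}$ acting on the kernel of $W$ at the singular point. The new eigenvalues there are of order $s^{\gamma}$ and contribute order $s^{\gamma\alpha}=s^{1-\alpha}$. This is why the right derivative is $+\infty$ even at $\gamma=1$.
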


\begin{proof}
Apply \cite[Lemma~5]{HT14} (see also \cite{rubboli2026quantum}) with the first reference operator equal to
$\I_{\A}$ and with $X/\Tr [X]$ in place of the normalized bipartite state.
Since multiplication of $X$ by a positive scalar does not change the
optimizer, that lemma gives uniqueness and shows that
\cref{eq:fixed-point} is both necessary and sufficient.  Because $X>0$,
its marginal $X_{\B}$ is faithful, and the optimizer is faithful as well.

For a fixed faithful state $\omega_{\B}$, write
\begin{align}
 Y_{X,\omega}
 :=
 (\I_{\A}\otimes
 \omega_{\B}^{\frac{1-\alpha}{2\alpha}})
 X
 (\I_{\A}\otimes
 \omega_{\B}^{\frac{1-\alpha}{2\alpha}}).
\end{align}
The fixed-$\omega_{\B}$ functional is differentiable in every Hermitian
direction $H$, with
\begin{align}
 &\mathrm D_X
 \left(\Tr[Y_{X,\omega}^{\alpha}]\right)^{1/\alpha}[H]
 =
 \left(\Tr[Y_{X,\omega}^{\alpha}]\right)^{
 \frac{1-\alpha}{\alpha}}
 \Tr\!\left[
 (\I_{\A}\otimes
 \omega_{\B}^{\frac{1-\alpha}{2\alpha}})
 Y_{X,\omega}^{\alpha-1}
 (\I_{\A}\otimes
 \omega_{\B}^{\frac{1-\alpha}{2\alpha}})
 H
 \right].
 \label{eq:fixed-reference-derivative}
\end{align}
The feasible set $\states{\B}$ is compact, the active optimizer is
unique and faithful, and the fixed-reference objective is continuously
Fr\'echet differentiable in a neighborhood of that optimizer.
Danskin's theorem \cite{Danskin1967} therefore gives, for every Hermitian $H$, its derivative is obtained by differentiating the fixed-\(\omega_B\) objective at \(\omega_B=\widetilde{\sigma}_B\):
\begin{align}
 \mathrm D\,\sand Q_\alpha(X)[H]
 =
 \left(\Tr[Y_{X,\sand\sigma}^{\alpha}]\right)^{
 \frac{1-\alpha}{\alpha}}
 \cdot
 \Tr\!\left[
 (\I_{\A}\otimes
 \sand\sigma_{\B}^{\frac{1-\alpha}{2\alpha}})
 Y_{X,\sand\sigma}^{\alpha-1}
 (\I_{\A}\otimes
 \sand\sigma_{\B}^{\frac{1-\alpha}{2\alpha}})
 H
 \right].
 \label{eq:Danskin-derivative}
\end{align}
Thus the operator multiplying $H$ is the gradient of
$\sand Q_\alpha$ at $X$.  Substituting the fixed-point identity
\cref{eq:fixed-point} into \cref{eq:Danskin-derivative}, the scalar
normalization factors cancel, and one obtains exactly
\cref{eq:gradient}.

Finally, concavity and positive homogeneity of degree one of $\sand Q_\alpha$ gives \cref{eq:sand-support}
for every $\tau>0$ as in the proof of \cref{lem:supporting-operator}.
\end{proof}

\begin{remark}[Non-faithful states] \label{remark:sand_support}
For general $X_{\A\B}\geq0$, let $P_{\B}=\supp (X_{\B})$.  Then
$X=(\I_{\A}\otimes P_{\B})X(\I_{\A}\otimes P_{\B})$, and
\cite[Lemma~5]{HT14} gives
$\supp(\sand \sigma_{\B})=P_{\B}$.  The optimizer theorem may therefore be
applied on the reduced conditioning space supported on $P_{\B}$.  
In the main
continuity proof it is cleaner to keep $X_{\A\B}>0$ and pass to singular states by
the common regularization at the end of the theorem.
\end{remark}

\begin{lemma}[Variational trace-distance bound]\label{lemma:trace-distance}
For states $\rho,\sigma$ and a Hermitian operator $G$,
\begin{equation}\label{eq:quotient-duality}
 \abs{\Tr[G(\rho-\sigma)]}
 \leq
 T(\rho,\sigma)\,
 \bigl(\lambda_{\max}(G)-\lambda_{\min}(G)\bigr).
\end{equation}
Moreover,
\begin{equation}\label{eq:quotient-norm}
 \frac{\lambda_{\max}(G)-\lambda_{\min}(G)}{2}
 =\inf_{c\in\mathbb R}\norm{G-c\I}_\infty.
\end{equation}
\end{lemma}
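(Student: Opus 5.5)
The plan is to use the tracelessness of $\rho-\sigma$ to shift $G$ by a scalar, and then apply Hölder's inequality with the Jordan decomposition of $\rho-\sigma$.

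\emph{First claim.} Since $\rho$ and $\sigma$ are states, $\Tr[\rho-\sigma]=0$. Hence for every $c\in\mathbb R$,
\[
\Tr[G(\rho-\sigma)]=\Tr[(G-c\I)(\rho-\sigma)].
\]
Hölder's inequality then gives
\[
\abs{\Tr[G(\rho-\sigma)]}\leq\norm{G-c\I}_\infty\norm{\rho-\sigma}_1=2T(\rho,\sigma)\norm{G-c\I}_\infty .
\]
Choose $c=\frac12(\lambda_{\max}(G)+\lambda_{\min}(G))$. The spectrum of $G-c\I$ lies in $[-\frac12(\lambda_{\max}-\lambda_{\min}),\frac12(\lambda_{\max}-\lambda_{\min})]$, so $\norm{G-c\I}_\infty\leq\frac12(\lambda_{\max}(G)-\lambda_{\min}(G))$. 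Substituting this yields \cref{eq:quotient-duality}.

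Alternatively, one can argue directly. Write $\rho-\sigma=P-N$ with $P,N\geq0$ orthogonally supported and $\Tr P=\Tr N=T(\rho,\sigma)$. Then
\[
\Tr[GP]\leq\lambda_{\max}(G)\Tr P,\qquad \Tr[GN]\geq\lambda_{\min}(G)\Tr N,
\]
which gives the upper bound; the lower bound follows symmetrically.

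\emph{Second claim.} The identity \cref{eq:quotient-norm} is a scalar statement. For Hermitian $G$,
\[
\norm{G-c\I}_\infty=\max\{\lambda_{\max}(G)-c,\;c-\lambda_{\min}(G)\},
\]
since the eigenvalues of $G-c\I$ are $\lambda_i(G)-c$. This is a maximum of an increasing and a decreasing affine function of $c$. It is therefore minimized where the two are equal, namely at the midpoint $c$, where the value is $\frac12(\lambda_{\max}(G)-\lambda_{\min}(G))$. At any other $c$, one of the two terms exceeds this value.

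There is no real obstacle here. The only point needing care is that the scalar shift uses that both $\rho$ and $\sigma$ have unit trace, which holds by assumption.
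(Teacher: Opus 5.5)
Your proposal is correct and essentially matches the paper's proof: shift $G$ by a scalar using $\Tr[\rho-\sigma]=0$, apply trace-norm/operator-norm duality, and minimize $\norm{G-c\I}_\infty=\max\{\lambda_{\max}(G)-c,\,c-\lambda_{\min}(G)\}$ at the midpoint $c=\frac12(\lambda_{\max}(G)+\lambda_{\min}(G))$. Your max-of-two-affine-functions argument for \cref{eq:quotient-norm} is slightly cleaner than the paper's wording, and your Jordan-decomposition variant is valid but not needed.
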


\begin{proof}
Since $\Tr[(\rho-\sigma)]=0$, for every $c\in\mathbb R$,
\[
 \Tr[G(\rho-\sigma)]
 =\Tr[(G-c\I)(\rho-\sigma)].
\]
Trace-norm duality therefore gives
\[
 \abs{\Tr[G(\rho-\sigma)]}
 \leq
 \norm{G-c\I}_\infty\norm{\rho-\sigma}_1.
\]
The quantity $\norm{G-c\I}_\infty = \max_i | \lambda_i(G) - c |$ is minimized when $c$ is the midpoint of
the spectral interval of $G$, and its minimum is
$\frac12(\lambda_{\max}(G)-\lambda_{\min}(G))$ because the minimum of $\max_i | \lambda_i(G) - c |  \geq \max\{|\lambda_{\max}-c|,|\lambda_{\min}-c|\}$ is achieved by $c = \frac12(\lambda_{\max}(G)+\lambda_{\min}(G))$.
Since
$\norm{\rho-\sigma}_1=2T(\rho,\sigma)$, both claims follow.
\end{proof}


\begin{lemma}[Schmidt-rank domination~{\cite{terhal2000schmidt}}]\label{lem:Schmidt-domination}
For every positive operator $X_{\A\B}$,
\begin{equation}\label{eq:Schmidt-domination}
 X_{\A\B}
 \leq
 r\I_{\A}\otimes\Tr_{\A}[X_{\A\B}],
 \qquad
 r:=\min\{d_{\A},d_{\B}\}.
\end{equation}
\end{lemma}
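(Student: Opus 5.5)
The plan is to reduce, by convexity, to rank-one operators and then handle each pure state with its Schmidt decomposition and Cauchy--Schwarz.

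Both sides of \cref{eq:Schmidt-domination} are linear in $X_{\A\B}$: the left side is $X$, and the right side is $r\I_{\A}\otimes\Tr_{\A}[X]$, which is linear because the partial trace is. Every positive operator is a positive combination $X=\sum_k |\psi_k\rangle\langle\psi_k|$ of rank-one projectors, for example from its spectral decomposition. So it suffices to prove
\[
|\psi\rangle\langle\psi| \leq r\,\I_{\A}\otimes\Tr_{\A}|\psi\rangle\langle\psi|
\]
for a single (unnormalized) vector $\psi\in\mathcal H_{\A}\otimes\mathcal H_{\B}$, and then sum over $k$.

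\textbf{Pure-state case.} Write the Schmidt decomposition $\psi=\sum_{i=1}^{s}\sqrt{\lambda_i}\,e_i\otimes f_i$ with $s\leq r$, orthonormal $\{e_i\}\subset\mathcal H_{\A}$, orthonormal $\{f_i\}\subset\mathcal H_{\B}$, and $\lambda_i>0$. Then $\Tr_{\A}|\psi\rangle\langle\psi|=\sum_i\lambda_i|f_i\rangle\langle f_i|$.

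For an arbitrary $\phi\in\mathcal H_{\A}\otimes\mathcal H_{\B}$, set $\phi_i:=(\I_{\A}\otimes\langle f_i|)\phi\in\mathcal H_{\A}$. We compute both quadratic forms in terms of these vectors.
\begin{itemize}
\item The left side gives $\langle\phi|\psi\rangle=\sum_i\sqrt{\lambda_i}\langle\phi_i|e_i\rangle$.
\item The right side gives $\langle\phi|\,\I_{\A}\otimes\Tr_{\A}|\psi\rangle\langle\psi|\,\phi\rangle=\sum_i\lambda_i\|\phi_i\|^2$.
\end{itemize}
Applying Cauchy--Schwarz to the sum over $i$, and then $|\langle\phi_i|e_i\rangle|\leq\|\phi_i\|$, yields
\[
|\langle\phi|\psi\rangle|^2\leq s\sum_{i=1}^s\lambda_i|\langle\phi_i|e_i\rangle|^2\leq s\sum_i\lambda_i\|\phi_i\|^2 .
\]
The right-hand side equals $s\,\langle\phi|\,\I_{\A}\otimes\Tr_{\A}|\psi\rangle\langle\psi|\,\phi\rangle$. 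Since $s\leq r$, the pure-state inequality follows.

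\textbf{Key observation.} The only nontrivial point is that the constant is the Schmidt rank, bounded by $r=\min\{d_{\A},d_{\B}\}$, rather than $d_{\A}$. This comes entirely from the number of terms in the Cauchy--Schwarz step, so the inequality is immediate once the reduction to pure states is made. No step here is expected to be a real obstacle.
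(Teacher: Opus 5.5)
Your proof is correct and follows the paper's argument essentially verbatim: reduce to a rank-one $X=|\psi\rangle\langle\psi|$ via the spectral decomposition, apply Cauchy--Schwarz over the at most $r$ Schmidt terms, and bound each $|\langle\phi_i|e_i\rangle|^2$ by $\|\phi_i\|^2=\langle\phi|(\I_{\A}\otimes|f_i\rangle\langle f_i|)|\phi\rangle$. Your vectors $\phi_i$ simply make explicit the second inequality that the paper states in one line.
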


\begin{proof}
It suffices to consider $X=|\psi\rangle\!\langle\psi|$.  Write a Schmidt
decomposition
$|\psi\rangle=\sum_{j=1}^{k}\sqrt{\lambda_j}|a_jb_j\rangle$, where
$k\leq r$.  For every vector $|v\rangle$, Cauchy--Schwarz gives
\[
 \abs{\langle\psi|v\rangle}^2
 \leq k\sum_{j=1}^{k}\lambda_j
       \abs{\langle a_jb_j|v\rangle}^2
 \leq r\langle v|\I_{\A}\otimes\Tr_{\A}[X_{\A\B}]|v\rangle.
\]
Summing a spectral decomposition proves the general case.
\end{proof}

\begin{lemma}[Range] \label{lem:lower_bound}
    Let $r:=\min\{d_{\A},d_{\B}\}$.
	For any state $\omega_{\A\B}$ and $\alpha \in [\frac12,1)$,
	\begin{align}
		 &-\log r
		\leq
		H_{\alpha}^{\uparrow}(\A|\B)_{\omega}
		\leq
		\log d_{\A},
		\qquad
		Q_\alpha(\omega)\geq r^{-\frac{1-\alpha}{\alpha}},
		 \label{eq:Q-lower-range}
        \\
        &-\log r
		\leq
		\sand H_{\alpha}^{\uparrow}(\A|\B)_{\omega}
		\leq
		\log d_{\A},
		\qquad
		\sand Q_\alpha(\omega)\geq r^{-\frac{1-\alpha}{\alpha}}.
	\end{align}
\end{lemma}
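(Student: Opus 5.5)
The range statement splits into an upper bound $\log d_{\A}$ and a lower bound $-\log r$ for each entropy. The lower bound is equivalent to the stated bound on $Q_\alpha$ and $\sand Q_\alpha$ through \cref{eq:Sibson-formula} and \cref{eq:H-Q}: since $\frac{\alpha}{1-\alpha}>0$, $H\geq-\log r$ holds if and only if $Q\geq r^{-(1-\alpha)/\alpha}$. It therefore suffices to prove the two $Q$-inequalities and the two upper bounds.

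\textbf{Upper bounds.} For the Petz case I will bound the divergence from below by choosing $\sigma_{\B}=\omega_{\B}$ in the infimum. This is not quite right as stated, since the infimum gives $H=-\inf D\ge -D(\omega\|\I\otimes\omega_{\B})$, which is a lower bound. The direct route is via data processing instead. Tracing out $\B$ maps $\omega_{\A\B}\mapsto\omega_{\A}$ and $\I_{\A}\otimes\sigma_{\B}\mapsto\I_{\A}$. Data processing for $\alpha\in[0,2]$ (Petz) and $\alpha\ge\frac12$ (sandwiched) then gives
\[
D_\alpha(\omega_{\A\B}\|\I_{\A}\otimes\sigma_{\B})\ge D_\alpha(\omega_{\A}\|\I_{\A})=-H_\alpha(\omega_{\A})\ge-\log d_{\A}.
\]
Taking the infimum over $\sigma_{\B}$ yields $H^\uparrow\le\log d_{\A}$, and the same argument works for the sandwiched quantity.

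\textbf{Lower bound for $Q_\alpha$.} I plan to use concavity and homogeneity of $Q_\alpha$ and reduce to pure states. Write $\omega=\sum_i p_i\psi_i$ by spectral decomposition. Concavity and degree-one homogeneity give $Q_\alpha(\omega)\ge\sum_ip_iQ_\alpha(\psi_i)$, so it suffices to show $Q_\alpha(\psi)\ge r^{-(1-\alpha)/\alpha}$ for a pure state. For pure $\psi$ we have $\psi^\alpha=\psi$, hence $\Tr_{\A}[\psi^\alpha]=\psi_{\B}$, which has eigenvalues $\lambda_j$ (Schmidt coefficients) on $k\le r$ entries. Then
\[
Q_\alpha(\psi)=\sum_j\lambda_j^{1/\alpha}.
\]
Since $1/\alpha>1$, the power mean inequality over at most $r$ terms summing to one gives $\sum_j\lambda_j^{1/\alpha}\ge r\cdot r^{-1/\alpha}=r^{-(1-\alpha)/\alpha}$.

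\textbf{Lower bound for $\sand Q_\alpha$.} The identical reduction applies, using \cref{lem:concavity} for concavity and homogeneity. For a pure $\psi$, I will lower bound the supremum in \cref{eq:Q} by a specific choice of $\tilde\sigma_{\B}$. Take $\tilde\sigma_{\B}=\psi_{\B}^{s}/\Tr\psi_{\B}^{s}$, or simply compute directly: $\big((\I\otimes\tilde\sigma^{\beta})\psi(\I\otimes\tilde\sigma^{\beta})\big)^\alpha$ is a rank-one operator raised to the power $\alpha$, so its trace is $\langle\psi|\I\otimes\tilde\sigma^{2\beta}|\psi\rangle^\alpha$ with $2\beta=\frac{1-\alpha}{\alpha}$. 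The supremum becomes
\[
\Big(\sup_{\tilde\sigma}\Tr[\psi_{\B}\tilde\sigma^{(1-\alpha)/\alpha}]\Big)^{\alpha}.
\]
Choosing $\tilde\sigma=\Pi_{\B}/k$, where $\Pi_{\B}$ is the support projection of $\psi_{\B}$, gives $\Tr[\psi_{\B}\tilde\sigma^{(1-\alpha)/\alpha}]=k^{-(1-\alpha)/\alpha}\ge r^{-(1-\alpha)/\alpha}$. Raising to $\alpha$ and then to $1/\alpha$ yields $\sand Q_\alpha(\psi)\ge r^{-(1-\alpha)/\alpha}$. Alternatively, the general inequality $\sand H^\uparrow\ge H^\uparrow$ can be invoked for $\alpha<1$ instead of this direct computation.

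The only delicate point is ensuring that the concavity-based reduction to pure states is legitimate. It is, because both functionals are concave and positively homogeneous, so they are superadditive on mixtures.
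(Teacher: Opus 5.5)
Your proof is correct, but it takes a different route from the paper in each of its three parts.

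\textbf{Upper bound.} You trace out $\B$ and arrive at $-H_\alpha(\omega_{\A})\geq-\log d_{\A}$. The paper instead traces out $\A$ and uses $\sand D_\alpha(\omega_{\B}\|d_{\A}\sigma_{\B})=\sand D_\alpha(\omega_{\B}\|\sigma_{\B})-\log d_{\A}\geq-\log d_{\A}$. Both are one-line data-processing arguments, and yours gives the slightly stronger $H_\alpha^\uparrow(\A|\B)_\omega\leq H_\alpha(\omega_{\A})$. You should delete the abandoned first sentence about choosing $\sigma_{\B}=\omega_{\B}$.

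\textbf{Petz lower bound.} The paper works directly with the mixed state. It inserts the Schmidt-rank domination $\omega^\alpha\leq r\I_{\A}\otimes\Tr_{\A}[\omega^\alpha]$ into $1=\Tr[\omega^\alpha(\omega^\alpha)^{(1-\alpha)/\alpha}]$ and applies operator monotonicity of $x^{(1-\alpha)/\alpha}$. This uses only L\"owner--Heinz and reuses the Schmidt-envelope device of the main proof, but it requires $(1-\alpha)/\alpha\leq1$, i.e.\ $\alpha\geq\frac12$. You instead invoke concavity of $Q_\alpha$ (Epstein's theorem, a heavier tool) to reduce to pure states, where the bound becomes a scalar power-mean inequality on the Schmidt coefficients. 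Since that concavity holds for every $0<\alpha<1$, your argument gives $Q_\alpha(\omega)\geq r^{-(1-\alpha)/\alpha}$ on the whole range $(0,1)$. It also makes the equality case transparent: a maximally entangled pure state of Schmidt rank $r$.

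\textbf{Sandwiched lower bound.} The paper compares with the max-relative entropy, $\sand D_\alpha(\omega\|\I_{\A}\otimes\omega_{\B})\leq D_{\max}(\omega\|\I_{\A}\otimes\omega_{\B})\leq\log r$, again via Schmidt domination. Your pure-state reduction via \cref{lem:concavity} is valid; that lemma's proof does not use the range lemma, so there is no circularity. The explicit choice $\tilde\sigma_{\B}=\Pi_{\B}/k$ then closes the argument. Your alternative via Araki--Lieb--Thirring, $\sand H_\alpha^\uparrow\geq H_\alpha^\uparrow\geq-\log r$, is the shortest route of all.
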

\begin{proof}
Indeed, \cref{lem:Schmidt-domination} and operator monotonicity of
$x^{\frac{1-\alpha}{\alpha}}$ for $0\leq \frac{1-\alpha}{\alpha}\leq1$, give
\begin{align}
	1=\Tr[\omega_{\A\B}]
	&=\Tr\!\left[
	\omega_{\A\B}^\alpha
	(\omega_{\A\B}^\alpha)^{\frac{1-\alpha}{\alpha}}
	\right]\notag\\
	&\leq
	\Tr\!\left[
	\omega_{\A\B}^\alpha
	\left(
	r\I_{\A}\otimes\Tr_{\A}[\omega_{\A\B}^\alpha]
	\right)^{\frac{1-\alpha}{\alpha}}
	\right]
	=r^{\frac{1-\alpha}{\alpha}}Q_\alpha(\omega).
	\label{eq:Q-lower-range-proof}
\end{align}
The lower bound for $\sand Q_{\alpha}$ follows from the monotonicity of the sandwiched
divergence in its order relative to the max-relative entropy.
For the upper
bound, data processing under $\Tr_{\A}$ gives, for every $\sigma_{\B}$,
\begin{align}
 \sand D_{\alpha}
 (\omega_{\A\B}\|\I_{\A}\otimes\sigma_{\B})
 \geq
 \sand D_{\alpha}(\omega_{\B}\|d_{\A}\sigma_{\B})
 \geq-\log d_{\A},
\end{align}
and the upper bound for $H_{\alpha}^{\uparrow}(\A|\B)_{\omega}$ follows similarly.
\end{proof}

\section{Saturation of the continuity bound}
\label{sec:quantum-model}

We first examine the quantum states that saturate the continuity bound in \cref{eq:main-bound-piecewise} and identify the features responsible for saturation.  
This motivates us to preserve those features when constructing a comparison point for an
arbitrary anchor state $\sigma_{\A\B}$.  
The continuity estimate for a general pair
$(\rho,\sigma)$ will be proved later in \cref{sec:main-proof}.

\subsection{The isotropic equality model}

Choose $r$-dimensional subspaces $\A_0\subseteq\A$ and
$\B_0\subseteq\B$, let $\Phi_r$ be a maximally entangled state on
$\A_0\otimes\B_0$, and let $P_{\B_0}$ denote the projection onto
$\B_0$.  Recall that
\[
D=d_{\A}r.
\]
The maximally entangled sector and its orthogonal complement inside
$\A\otimes\B_0$ satisfy
\begin{equation}\label{eq:quantum-equality-multiplicity}
	\Tr_{\A}[\Phi_r]=\frac1rP_{\B_0},
	\qquad
	\Tr_{\A}\!\left[
	\I_{\A}\otimes P_{\B_0}-\Phi_r
	\right]
	=\frac{D-1}{r}P_{\B_0}.
\end{equation}
Thus their partial traces are proportional to the same operator
$P_{\B_0}/r$, with proportionality factors $1$ and $D-1$.

For $0\leq\delta\leq1-D^{-1}$, consider the isotropic family
\begin{equation}\label{eq:quantum-equality-family}
	\rho_\delta
	:=
	(1-\delta)\Phi_r
	+\frac{\delta}{D-1}
	\left(
	\I_{\A}\otimes P_{\B_0}-\Phi_r
	\right).
\end{equation}
The positive and negative parts of $\rho_\delta-\Phi_r$ are supported on
the two orthogonal sectors in
\cref{eq:quantum-equality-multiplicity}, and both have trace $\delta$.
Hence
\begin{equation}\label{eq:isotropic-distance}
	T(\rho_\delta,\Phi_r)=\delta.
\end{equation}
Since the two sectors are orthogonal,
\begin{equation}\label{eq:root-equality-family}
	\rho_\delta^\alpha
	=
	a\Phi_r
	+b\left(
	\I_{\A}\otimes P_{\B_0}-\Phi_r
	\right),
	\qquad
	\boxed{
		a:=(1-\delta)^\alpha,
		\quad
		b:=\left(\frac{\delta}{D-1}\right)^\alpha.}
\end{equation}
Thus $a$ and $b$ are exactly the two nonzero spectral values of
$\rho_\delta^\alpha$ on $\A\otimes\B_0$, with multiplicities $1$ and
$D-1$, respectively.  The increasing branch
$\delta\leq1-D^{-1}$ is precisely the condition $a\geq b$.

Taking the partial trace in \cref{eq:root-equality-family} gives
\begin{equation}\label{eq:isotropic-root-marginal}
	\Tr_{\A}[\rho_\delta^\alpha]
	=
	\frac{a+(D-1)b}{r}P_{\B_0}.
\end{equation}
Consequently,
\begin{equation}\label{eq:isotropic-Q-values}
	Q_\alpha(\Phi_r)
	=
	r^{1-1/\alpha},
	\qquad
	Q_\alpha(\rho_\delta)
	=
	r^{1-1/\alpha}
	\bigl[a+(D-1)b\bigr]^{1/\alpha}.
\end{equation}
The Sibson formula therefore gives the saturation identity
\begin{align}
	H_\alpha^\uparrow(\A|\B)_{\rho_\delta}
	-H_\alpha^\uparrow(\A|\B)_{\Phi_r}
	&=
	\frac{1}{1-\alpha}
	\log\bigl[a+(D-1)b\bigr]
	\notag\\
	&=
	\frac{1}{1-\alpha}
	\log\!\left[
	(1-\delta)^\alpha
	+(D-1)^{1-\alpha}\delta^\alpha
	\right].
	\label{eq:isotropic-saturation}
\end{align}

Similarly,\footnote{
Indeed, a straightforward calculation by \cref{lem:sand_supporting} and \cref{remark:sand_support} shows that the optimizer to $\sand Q_{\alpha}(\omega)$ for either $\omega\in\{\rho_{\delta},\Phi_r\}$ is given by $\sand \sigma_{\B} = P_{\B_0}/r$. 
Hence, $\sand Q_{\alpha}(\rho_\delta)
 =r^{-\frac{1-\alpha}{\alpha}}
 [a+(D-1)b]^{1/\alpha}$,
 and
 $\sand Q_{\alpha}(\Phi_r)
 =r^{-\frac{1-\alpha}{\alpha}}$.
}
\begin{align}
	\sand H_\alpha^\uparrow(\A|\B)_{\rho_\delta}
	- \sand H_\alpha^\uparrow(\A|\B)_{\Phi_r}
	&=
	\frac{1}{1-\alpha}
	\log\bigl[a+(D-1)b\bigr]
	\notag\\
	&=
	\frac{1}{1-\alpha}
	\log\!\left[
	(1-\delta)^\alpha
	+(D-1)^{1-\alpha}\delta^\alpha
	\right].
\end{align}

The equality case also indicates where the supporting operator should be
evaluated.  To see this, suppose
$0<\delta<1-D^{-1}$ and restrict attention to the effective support
$\A\otimes\B_0$.  A direct blockwise evaluation of the Petz supporting operator gives
\begin{align}
	\nabla Q_{\alpha}(\rho_\delta)
	&=
	\left(
	\frac{a+(D-1)b}{r}
	\right)^{\frac{1-\alpha}{\alpha}}
	\left[
	a^{-\frac{1-\alpha}{\alpha}}\Phi_r
	+b^{-\frac{1-\alpha}{\alpha}}
	\left(
	\I_{\A}\otimes P_{\B_0}-\Phi_r
	\right)
	\right].
	\label{eq:isotropic-supporting-operator}
\end{align}
Since $a\geq b$, the smaller
eigenvalue of $\nabla Q_{\alpha}(\rho_\delta)$ occurs on the maximally entangled sector
and the larger one occurs on its complement.  Moreover,
$\rho_\delta-\Phi_r$ removes exactly $\delta$ units of trace from the
first sector and places them in the second.  It follows that
\begin{align}
	Q_\alpha(\rho_\delta)
	&=
	\Tr[\nabla Q_{\alpha}(\rho_\delta)\rho_\delta]
	\notag\\
	&=
	\Tr[\nabla Q_{\alpha}(\rho_\delta)\Phi_r]
	+\delta\left(
	\lambda_{\max}(\nabla Q_{\alpha}(\rho_\delta))
	-\lambda_{\min}(\nabla Q_{\alpha}(\rho_\delta))
	\right).
	\label{eq:isotropic-supporting-saturation}
\end{align}
Here the spectral extrema are taken on the effective support
$\A\otimes\B_0$.

Hence, in the equality model, the supporting operator at the boundary state
$\rho_\delta$ reproduces the sharp value exactly: its expectation on the
anchor $\Phi_r$, together with the maximal variation permitted by trace
distance $\delta$, equals $Q_\alpha(\rho_\delta)$.  This observation
suggests that, for a general anchor, we should construct a comparison point
that retains the same weights $a,b$ and the same partial-trace
multiplicities.

\subsection{Transferring the equality geometry to an arbitrary anchor}

Now fix an arbitrary anchor $\sigma_{\A\B}$.  The state maximizing
$Q_\alpha$ in the trace-distance ball around $\sigma_{\A\B}$ is not known
explicitly, so there is no available analogue of the extremal state
$\rho_\delta$.  Instead, we construct a comparison point by preserving the
features of $\rho_\delta^\alpha$ identified above.

Applying \cref{lem:Schmidt-domination} to
$\sigma_{\A\B}^{\alpha}$ gives
\begin{equation}\label{eq:root-envelope-explicit}
	\sigma_{\A\B}^{\alpha}
	\leq
	r\I_{\A}\otimes
	\Tr_{\A}[\sigma_{\A\B}^{\alpha}],
\end{equation}
and hence
\begin{equation}\label{eq:root-complement-explicit}
	r\I_{\A}\otimes
	\Tr_{\A}[\sigma_{\A\B}^{\alpha}]
	-
	\sigma_{\A\B}^{\alpha}
	\geq0.
\end{equation}
The two positive summands
\[
\sigma_{\A\B}^{\alpha}
\quad\text{and}\quad
r\I_{\A}\otimes
\Tr_{\A}[\sigma_{\A\B}^{\alpha}]
-
\sigma_{\A\B}^{\alpha}
\]
need not commute or have orthogonal supports.  Nevertheless, their partial
traces satisfy exactly the same $1:(D-1)$ relation as in the equality
model:
\begin{equation}\label{eq:arbitrary-anchor-multiplicity}
	\Tr_{\A}\!\left[
	r\I_{\A}\otimes
	\Tr_{\A}[\sigma_{\A\B}^{\alpha}]
	-
	\sigma_{\A\B}^{\alpha}
	\right]
	=
	(D-1)\Tr_{\A}[\sigma_{\A\B}^{\alpha}].
\end{equation}


Motivated by \cref{eq:root-equality-family}, we retain the same weights
$a$ and $b$ and define
\begin{equation}\label{eq:comparison-point-explicit}
	\boxed{
		\tau_{\sigma,\delta}^{(\alpha)}
		:=
		\left[
		a\sigma_{\A\B}^{\alpha}
		+b\left(
		r\I_{\A}\otimes
		\Tr_{\A}[\sigma_{\A\B}^{\alpha}]
		-
		\sigma_{\A\B}^{\alpha}
		\right)
		\right]^{1/\alpha}.}
\end{equation}
This comparison point need not be normalized; only its positivity is
required for the supporting-hyperplane argument.  Its $\alpha$-power has
the same weights $a,b$ as the isotropic equality state, while
\cref{eq:arbitrary-anchor-multiplicity} preserves the same partial-trace
multiplicities.  
The correspondence is summarized schematically in
\cref{fig:schmidt-envelope}.
Consequently,
\begin{equation}\label{eq:comparison-partial-trace}
	\Tr_{\A}\!\left[
	(\tau_{\sigma,\delta}^{(\alpha)})^\alpha
	\right]
	=
	\bigl[a+(D-1)b\bigr]
	\Tr_{\A}[\sigma_{\A\B}^{\alpha}],
\end{equation}
and therefore
\begin{equation}\label{eq:comparison-target-ratio}
	Q_\alpha(\tau_{\sigma,\delta}^{(\alpha)})
	=
	\bigl[a+(D-1)b\bigr]^{1/\alpha}
	Q_\alpha(\sigma).
\end{equation}

By the Sibson formula, the desired one-sided continuity estimate is
equivalent to
\begin{equation}\label{eq:target-Q-ratio}
	Q_\alpha(\rho)
	\leq
	\bigl[a+(D-1)b\bigr]^{1/\alpha}
	Q_\alpha(\sigma).
\end{equation}
Thus the right-hand side of \cref{eq:target-Q-ratio} is exactly
$Q_\alpha(\tau_{\sigma,\delta}^{(\alpha)})$.  This equality fixes the
desired value of the comparison point, but it does not yet prove that this
value bounds $Q_\alpha(\rho)$.

The additional motivation for evaluating the supporting operator at
$\tau_{\sigma,\delta}^{(\alpha)}$ comes from
\cref{eq:isotropic-supporting-saturation}.  When
$\sigma_{\A\B}=\Phi_r$, the construction gives
\begin{equation}\label{eq:comparison-recovers-isotropic}
	r\I_{\A}\otimes\Tr_{\A}[\Phi_r^\alpha]
	=
	\I_{\A}\otimes P_{\B_0},
	\qquad
	\tau_{\Phi_r,\delta}^{(\alpha)}
	=
	\rho_\delta,
\end{equation}
and its supporting operator satisfies the equality
\cref{eq:isotropic-supporting-saturation}.  We are therefore led to ask
whether, for a general anchor state $\sigma_{\A\B}$, the supporting operator $\nabla Q_{\alpha}$ at
$\tau_{\sigma,\delta}^{(\alpha)}$, denoted by $G_{\sigma,\delta}^{(\alpha)} := \nabla Q_{\alpha}(\tau_{\sigma,\delta}^{(\alpha)})$ satisfies the corresponding inequality
\begin{align}
	&\Tr\!\left[
	G_{\sigma,\delta}^{(\alpha)}\sigma_{\A\B}
	\right]
	+\delta\left(
	\lambda_{\max}
	\left(G_{\sigma,\delta}^{(\alpha)}\right)
	-
	\lambda_{\min}
	\left(G_{\sigma,\delta}^{(\alpha)}\right)
	\right)
	\leq
	Q_\alpha(\tau_{\sigma,\delta}^{(\alpha)}).
	\label{eq:comparison-supporting-goal}
\end{align}
This is precisely what will be proved in \cref{sec:main-proof}.  Once
\cref{eq:comparison-supporting-goal} is established, the supporting
inequality for $Q_\alpha$ (\cref{lem:supporting-operator}) and the variational trace-distance bound (\cref{lemma:trace-distance}) yield
\cref{eq:target-Q-ratio}.

The construction therefore mimics the isotropic equality state in the
specific features relevant to the later proof: positivity of the
complement, the weights $a,b$, the $1:(D-1)$ partial-trace relation, the
target value of $Q_\alpha$, and the point at which the supporting operator
is evaluated. 

Following the similar reasoning, define the sandwiched comparison point
$\sand\tau_{\A\B}$ by
\begin{align}
&\left(
(\I_{\A}\otimes\sand\sigma_{\B}^{\frac{1-\alpha}{2\alpha}})
\sand\tau_{\A\B}
(\I_{\A}\otimes\sand\sigma_{\B}^{\frac{1-\alpha}{2\alpha}})
\right)^{\alpha}
\notag\\
&=(a-b)
\left(
(\I_{\A}\otimes\sand\sigma_{\B}^{\frac{1-\alpha}{2\alpha}})
\sigma_{\A\B}
(\I_{\A}\otimes\sand\sigma_{\B}^{\frac{1-\alpha}{2\alpha}})
\right)^{\alpha}
+
rb\,\I_{\A}\otimes
\Tr_{\A}\!\left[\left(
(\I_{\A}\otimes\sand\sigma_{\B}^{\frac{1-\alpha}{2\alpha}})
\sigma_{\A\B}
(\I_{\A}\otimes\sand\sigma_{\B}^{\frac{1-\alpha}{2\alpha}})
\right)^{\alpha}\right],
\label{eq:comparison}
\end{align}
where
$\sand\sigma_{\B}$ is an optimizer in \cref{eq:Q}.
The first coefficient is nonnegative exactly on the increasing branch
$\delta\leq1-D^{-1}$.  The operator $\sand\tau$ need not be normalized.
This causes no difficulty because $\sand Q_{\alpha}$ is homogeneous of
degree one and its supporting operator is invariant under positive rescaling
of the comparison point.

Taking the partial trace of \cref{eq:comparison} immediately gives
\begin{align}
&\Tr_{\A}\!\left[\left(
(\I_{\A}\otimes\sand\sigma_{\B}^{\frac{1-\alpha}{2\alpha}})
\sand\tau_{\A\B}
(\I_{\A}\otimes\sand\sigma_{\B}^{\frac{1-\alpha}{2\alpha}})
\right)^{\alpha}\right]
=
[a+(D-1)b]\,\Tr_{\A}\!\left[\left(
(\I_{\A}\otimes\sand\sigma_{\B}^{\frac{1-\alpha}{2\alpha}})
\sigma_{\A\B}
(\I_{\A}\otimes\sand\sigma_{\B}^{\frac{1-\alpha}{2\alpha}})
\right)^{\alpha}\right].
 \label{eq:comparison-marginal}
\end{align}
This identity explains why the same optimizer $\sand\sigma_{\B}$ remains
optimal at the comparison point and why the comparison value is exactly the
sharp target value.

\section{Proof of the Petz sharp continuity bound}\label{sec:main-proof}

Fix $\alpha\in[\frac12,1)$ 
We first assume that $\sigma_{\A\B}>0$ and
$0<\delta<1-D^{-1}$, so $a>b>0$.
Let
\begin{equation}\label{eq:specialized-supporting-operator}
 G_{\sigma,\delta}^{(\alpha)}
 := \nabla Q_{\alpha}(\tau_{\sigma,\delta}^{(\alpha)})
\end{equation}
be the supporting operator defined in \eqref{eq:general-gradient} at the
comparison point $\tau_{\sigma,\delta}^{(\alpha)}$ given in \cref{eq:comparison-point-explicit}.  
Since
$\mathrm D_{\alpha,\tau}
 =(\mathrm D_{1/\alpha,\tau^\alpha})^{-1}$, the analytical formula
\cref{eq:general-gradient-formula} in \cref{lem:supporting-operator}, together with
\cref{eq:comparison-partial-trace}, gives the explicit expression
\begin{align}
 G_{\sigma,\delta}^{(\alpha)}
 & =
 \left(\frac{a+(D-1)b}{r}\right)^{\frac{1-\alpha}{\alpha}}
 \frac1\alpha
 \left(
 \mathrm D_{\frac1\alpha,\,
 a\sigma_{\A\B}^\alpha
 +b\left(
 r\I_{\A}\otimes\Tr_{\A}[\sigma_{\A\B}^\alpha]
 -\sigma_{\A\B}^\alpha
 \right)}
 \right)^{-1}\left[
 \left(
 r\I_{\A}\otimes\Tr_{\A}[\sigma_{\A\B}^\alpha]
 \right)^{\frac{1-\alpha}{\alpha}}
 \right].
 \label{eq:explicit-support-score}
\end{align}
Here we also used
\[
 \left(
 r\I_{\A}\otimes\Tr_{\A}[\sigma_{\A\B}^\alpha]
 \right)^{\frac{1-\alpha}{\alpha}}
 =r^{\frac{1-\alpha}{\alpha}}\I_{\A}\otimes
 \left(\Tr_{\A}[\sigma_{\A\B}^\alpha]\right)^{\frac{1-\alpha}{\alpha}}.
\]
%

\cref{prop:score-calibration} below provides crucial estimates for the supporting operator $G_{\sigma,\delta}^{(\alpha)}$ that will be used in our proof shortly.

\begin{proposition}[Bounds on Petz supporting operator]
\label{prop:score-calibration}
The supporting operator in \cref{eq:explicit-support-score} satisfies
\begin{align}
 &\text{(spectral bounds)}
 &&\left(\frac{a+(D-1)b}{r}\right)^{\frac{1-\alpha}{\alpha}}
 a^{-\frac{1-\alpha}{\alpha}}\I_{\A\B}
 \leq G_{\sigma,\delta}^{(\alpha)}
 \leq
 \left(\frac{a+(D-1)b}{r}\right)^{\frac{1-\alpha}{\alpha}}
 b^{-\frac{1-\alpha}{\alpha}}\I_{\A\B},
 \label{eq:explicit-score-interval}
\\
&\text{(anchor calibration)}
&&\Tr\!\left[G_{\sigma,\delta}^{(\alpha)}\sigma_{\A\B}\right]
\leq
 \left(\frac{a+(D-1)b}{a}\right)^{\frac{1-\alpha}{\alpha}}
 Q_\alpha(\sigma). \label{eq:anchor-calibration}
\end{align}
Consequently,
\begin{align}
 &\lambda_{\max}\!\left(G_{\sigma,\delta}^{(\alpha)}\right)
 -\lambda_{\min}\!\left(G_{\sigma,\delta}^{(\alpha)}\right)
 \leq
 \left(\frac{a+(D-1)b}{r}\right)^{\frac{1-\alpha}{\alpha}}
 \left(b^{-\frac{1-\alpha}{\alpha}}-a^{-\frac{1-\alpha}{\alpha}}\right).
 \label{eq:explicit-score-diameter}
\end{align}
\end{proposition}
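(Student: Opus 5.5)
The plan is to work entirely with the $\alpha$-power of the comparison point. Write $W:=\sigma_{\A\B}^\alpha$, $E:=r\I_{\A}\otimes\Tr_{\A}[W]$, $Z:=aW+b(E-W)=(a-b)W+bE$ (so $Z=\tau^\alpha$), $p:=\frac{1-\alpha}{\alpha}\in(0,1]$ and $c:=\bigl(\frac{a+(D-1)b}{r}\bigr)^{p}$. Then \cref{eq:explicit-support-score} reads
\[
G=c\,\tfrac1\alpha\,\mathcal L_Z(E^p),\qquad \mathcal L_Z:=(\mathrm D_{1/\alpha,Z})^{-1}=\mathrm D_{\alpha,Z^{1/\alpha}} .
\]
Two facts about $\mathcal L_Z$ will do most of the work. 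First, it is a positive map, since $Y\mapsto Y^\alpha$ is operator monotone. Second, it satisfies the calibration identity $\frac1\alpha\mathcal L_Z(Z^p)=\I$, which is \cref{eq:Frechet-identities} read backwards: $\mathrm D_{1/\alpha,Z}(\I)=\frac1\alpha Z^{p}$.

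\emph{Spectral bounds.} By \cref{lem:Schmidt-domination}, $W\le E$. Since $a\ge b$, this gives
\[
bE\le Z\le (a-b)E+bE=aE,\qquad\text{i.e.}\qquad a^{-1}Z\le E\le b^{-1}Z .
\]
Because $0<p\le1$ exactly when $\alpha\ge\frac12$, operator monotonicity of $x\mapsto x^p$ yields $a^{-p}Z^p\le E^p\le b^{-p}Z^p$. Applying the positive map $\frac1\alpha\mathcal L_Z$ and using the calibration identity gives $a^{-p}\I\le\frac1\alpha\mathcal L_Z(E^p)\le b^{-p}\I$. Multiplying by $c$ gives \cref{eq:explicit-score-interval}, and \cref{eq:explicit-score-diameter} follows by subtracting the two extreme eigenvalue bounds.

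\emph{Anchor calibration.} By Hilbert--Schmidt self-adjointness, $\Tr[G\sigma]=c\,\frac1\alpha\Tr[E^p\,\mathcal L_Z(W^{1/\alpha})]$. The target is
\[
\Bigl(\tfrac{a+(D-1)b}{a}\Bigr)^{p}Q_\alpha(\sigma)=c\,a^{-p}\Tr[E^pW],
\]
using $Q_\alpha(\sigma)=r^{-p}\Tr[W E^p]$. So it suffices to show
\[
\tfrac1\alpha\Tr\!\left[E^p\,\mathcal L_Z(W^{1/\alpha})\right]\le a^{-p}\Tr[E^pW].
\]
This holds with equality when $Z$ is replaced by $aW$. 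Indeed, $\mathcal L_{aW}=a^{-p}\mathcal L_W$ and $\mathcal L_W(W^{1/\alpha})=\alpha W$, both from \cref{eq:Frechet-identities}. Since $Z=aW+b(E-W)\ge aW$, the inequality would follow from a monotonicity statement: $Z\mapsto\Tr[E^p\mathcal L_Z(W^{1/\alpha})]$ is nonincreasing along the segment from $aW$ to $Z$, tested against the particular positive weight $E^p$ that defines $Z$.

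\emph{Main obstacle.} I expect this last step to be the hard one. It is not an operator inequality in general, and the two easy tangent inequalities available (concavity of $Y\mapsto Y^\alpha$, and concavity of $Q_\alpha$) both point the wrong way. I would first try differentiating along $Z_t=aW+t(E-W)$ for $t\in[0,b]$. I would write $\mathcal L_{Z_t}$ via the integral (resolvent) representation of $\mathrm D_{\alpha,\cdot}$, so that $\frac{\dd}{\dd t}\mathcal L_{Z_t}(W^{1/\alpha})$ becomes a second Fréchet derivative in the direction $E-W\ge0$. I would then show that its pairing with $E^p$ is nonpositive, exploiting the facts that $E^p$ is sandwiched between $a^{-p}Z_t^p$ and $t^{-p}Z_t^p$ and that $E-W$ commutes with nothing in particular only through $\I_{\A}\otimes\Tr_{\A}$. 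If this fails, the fallback is to bound $E^p$ by a combination of $Z^p$ and $W^p$, using $bE\le Z-(a-b)W$. That reduces everything to the calibration identities $\frac1\alpha\mathcal L_Z(Z^p)=\I$ and $\Tr[\mathcal L_Z(W^{1/\alpha})Z^{p}]=\alpha\Tr[\ldots]$, where the restriction $\alpha\ge\frac12$ should enter again through operator concavity of $x^p$.
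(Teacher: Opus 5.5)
There is a genuine gap: the anchor calibration, which is the nontrivial part of the proposition, is reduced correctly but never proved.

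Your spectral bounds are correct and follow the paper's argument: $bE\le Z\le aE$, operator monotonicity of $x^{p}$ for $p\le 1$, positivity of $\mathcal L_Z$, and $\tfrac1\alpha\mathcal L_Z(Z^p)=\I$. The diameter bound then follows. Your reduction of the anchor calibration to $\tfrac1\alpha\Tr[E^p\mathcal L_Z(\sigma)]\le a^{-p}\Tr[E^pW]$ is also correct. Set $\theta:=b/a$, $M:=Z/a=W+\theta(E-W)$, $q:=1/\alpha=1+p$ and $Y:=q\,\mathrm D_{q,M}^{-1}(E^p)=a^{p}\tfrac1\alpha\mathcal L_Z(E^p)$. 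Then your inequality is exactly $\Tr[W^qY]\le\Tr[WE^p]$, i.e.\ \cref{eq:transport-anchor}, and your spectral bounds say $\I\le Y\le\theta^{-p}\I$. But this inequality is all the remaining work, and you leave it open.

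Your monotonicity route asks that $t\mapsto\Tr[E^p\mathcal L_{Z_t}(\sigma)]$ be nonincreasing. That is strictly stronger than what is needed; the paper only compares the given $\theta$ with the endpoint $\theta=0$ and never proves monotonicity. You also give no mechanism that would sign a second Fr\'echet derivative paired with the non-commuting weight $E^p$.

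Your fallback cannot give the sharp constant for $\alpha>\frac12$. At $\sigma=\Phi_r$ the anchor calibration is an equality, and it is nearly tight for faithful $\sigma$ near $\Phi_r$. So any pointwise upper bound on $E^p$ must be exact on the $\Phi_r$-sector, where $W=E=1$ and $Z=a$. There the tangent bound $(Z-(a-b)W)^p\le Z^p-(a-b)\mathrm D_{p,Z}(W)$ reduces to $\theta^p\le 1-p(1-\theta)$. This is strict for $0<\theta<1$ and $0<p<1$, so the bound overshoots.

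The missing idea is to work at the trace level at fixed $\theta$, and to feed your spectral upper bound $Y\le\theta^{-p}\I$ back into the anchor estimate.
\begin{itemize}
\item From $\mathrm D_{q,M}(M)=qM^q$ and self-adjointness, $\Tr[WE^p]=\Tr[YM^q]-\theta\Tr[(E-W)E^p]$. So the claim is equivalent to $\Tr[Y(M^q-W^q)]\ge\theta\Tr[(E-W)E^p]$.
\item Operator convexity of $x^q$ (here $1<q\le 2$) makes the tangent defect $\Delta:=\theta\,\mathrm D_{q,M}(E-W)-(M^q-W^q)$ positive. Hence $\Tr[Y(M^q-W^q)]=q\theta\Tr[(E-W)E^p]-\Tr[Y\Delta]$, and $\Tr[Y\Delta]\le\theta^{-p}\Tr\Delta$.
\item Homogeneity and the common-shift contraction \cref{lem:common-shift} (which uses operator concavity of $x^{p}$) give
\[
\Tr\Delta=\Breg_q(W,M)=\theta^q\Breg_q(\theta^{-1}W,\theta^{-1}W+E-W)\le\theta^q\Breg_q(W,E).
\]
\item The budget identity $p\Tr[(E-W)E^p]-\Breg_q(W,E)=\Tr[W(E^p-W^p)]\ge0$ holds by operator monotonicity of $x^p$.
\item Combining, $\Tr[Y\Delta]\le\theta^{q-p}\Breg_q(W,E)\le p\theta\Tr[(E-W)E^p]$. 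Since $q-p=1$, the claim follows.
\end{itemize}
Note that $\alpha\ge\frac12$ is used three more times here (convexity of $x^q$, and concavity and monotonicity of $x^p$), not only in the spectral bounds.
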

\noindent The proof is deferred to \cref{sec:Petz_technical}.


\medskip
We are now ready to prove the main result of the sharp continuity bound of \cref{thm:main}.

\begin{proof}[Proof of \cref{thm:main}]
Let $T(\rho,\sigma)\leq\delta$.  By
\cref{lem:supporting-operator} with the supporting operator given in
\cref{eq:specialized-supporting-operator}, we have
\begin{align}\label{eq:supporting-transport-split}
 Q_\alpha(\rho)
 &\leq \Tr\!\left[G_{\sigma,\delta}^{(\alpha)}\rho\right]
 =
 \underbrace{\Tr\!\left[G_{\sigma,\delta}^{(\alpha)}(\rho-\sigma)\right]}_{\text{change-of-measure term}}
+ \underbrace{\Tr\!\left[G_{\sigma,\delta}^{(\alpha)}\sigma\right]}_{\text{anchor term}}.
\end{align}
Applying \cref{lemma:trace-distance} together with
\cref{eq:explicit-score-diameter,eq:anchor-calibration} in \cref{prop:score-calibration} yields
\begin{align}
 Q_\alpha(\rho)
 &\leq
 \left(\frac{a+(D-1)b}{a}\right)^{\frac{1-\alpha}{\alpha}}Q_\alpha(\sigma)
 +r^{-\frac{1-\alpha}{\alpha}}\delta
 \bigl[a+(D-1)b\bigr]^{\frac{1-\alpha}{\alpha}}
 \bigl(b^{-\frac{1-\alpha}{\alpha}}-a^{-\frac{1-\alpha}{\alpha}}\bigr).
 \label{eq:score-proof-before-scalar}
\end{align}

Next, we rewrite the second term of \cref{eq:score-proof-before-scalar}.
By \cref{eq:root-equality-family}, the weights satisfy
\begin{equation}\label{eq:root-mass-identities}
 a^{1/\alpha}=1-\delta,
 \qquad
 (D-1)b^{1/\alpha}=\delta.
\end{equation}
Since $1+\frac{1-\alpha}{\alpha}=\frac{1}{\alpha}$, we can expand the sum of the weights as
\begin{equation*}
	a+(D-1)b 
	= (1-\delta)a^{-\frac{1-\alpha}{\alpha}} + \delta b^{-\frac{1-\alpha}{\alpha}} 
	= a^{-\frac{1-\alpha}{\alpha}} + \delta\bigl(b^{-\frac{1-\alpha}{\alpha}} - a^{-\frac{1-\alpha}{\alpha}}\bigr).
\end{equation*}
Multiplying both sides by $[a+(D-1)b]^{\frac{1-\alpha}{\alpha}}$ and rearranging yields an exact identity for the coefficient in \cref{eq:score-proof-before-scalar}:
\begin{equation}\label{eq:exact-scalar-closure}
	\delta\bigl[a+(D-1)b\bigr]^{\frac{1-\alpha}{\alpha}}
	\bigl(b^{-\frac{1-\alpha}{\alpha}}-a^{-\frac{1-\alpha}{\alpha}}\bigr)
	=
	\bigl[a+(D-1)b\bigr]^{1/\alpha}
	-\left(\frac{a+(D-1)b}{a}\right)^{\frac{1-\alpha}{\alpha}}.
\end{equation}
Substitute \cref{eq:exact-scalar-closure} into
\cref{eq:score-proof-before-scalar}.  Since
$Q_\alpha(\sigma)\geq r^{-\frac{1-\alpha}{\alpha}}$ by
\cref{lem:lower_bound}, the intermediate terms cancel:
\begin{align}
 Q_\alpha(\rho)
 &\leq
 \left(\frac{a+(D-1)b}{a}\right)^{\frac{1-\alpha}{\alpha}}Q_\alpha(\sigma)\notag\\
 &\quad+
 r^{-\frac{1-\alpha}{\alpha}}\left[
 \bigl[a+(D-1)b\bigr]^{1/\alpha}
 -\left(\frac{a+(D-1)b}{a}\right)^{\frac{1-\alpha}{\alpha}}
 \right]\notag\\
 &\leq
 \bigl[a+(D-1)b\bigr]^{1/\alpha}Q_\alpha(\sigma).
 \label{eq:Q-ratio-final}
\end{align}
By \cref{eq:Sibson-formula},
\begin{align}
 H_\alpha^\uparrow(\A|\B)_\rho
 -H_\alpha^\uparrow(\A|\B)_\sigma
 &\leq
 \frac{1}{1-\alpha}\log\bigl[a+(D-1)b\bigr]\notag\\
 &=\frac{1}{1-\alpha}
 \log\!\left[(1-\delta)^\alpha
 +(D-1)^{1-\alpha}\delta^\alpha\right].
 \label{eq:one-sided-final}
\end{align}
Interchanging $\rho$ and $\sigma$ proves the absolute-value bound for full-rank
states.
For general states, set
\begin{align}
\rho_\epsilon
:=
(1-\epsilon)\rho
+\epsilon\frac{\I_{\A\B}}{d_{\A}d_{\B}},
\qquad
\sigma_\epsilon
:=
(1-\epsilon)\sigma
+\epsilon\frac{\I_{\A\B}}{d_{\A}d_{\B}}.
\end{align}
Then $T(\rho_\epsilon,\sigma_\epsilon)
=(1-\epsilon)T(\rho,\sigma)
\leq\delta$.
Apply the full-rank result and let $\epsilon\downarrow0$, using the
continuity of $Q_\alpha$.

At
$\delta=1-D^{-1}$, its value is $\log D$.  For
$1/\alpha \geq 1$, the Schatten partial-trace estimate
\[
\norm{\Tr_{\A}[X]}_{1/\alpha}
\leq d_{\A}^{1-\alpha}\norm{X}_{1/\alpha}
\]
applied to $X=\rho^\alpha$ gives
\begin{equation}\label{eq:partial-trace-Schatten}
	Q_\alpha(\rho)\leq d_{\A}^{\frac{1-\alpha}{\alpha}},
\end{equation}
while \cref{eq:Q-lower-range} gives
$Q_\alpha(\rho)\geq r^{-\frac{1-\alpha}{\alpha}}$.  By
\cref{eq:Sibson-formula},
\begin{equation}\label{eq:entropy-range}
	-\log r
	\leq H_\alpha^\uparrow(\A|\B)_\rho
	\leq\log d_{\A}.
\end{equation}
Thus the total range is $\log(d_{\A}r)=\log D$, proving the plateau.

Sharpness on the increasing branch ($0\leq\delta<1-D^{-1}$) was established directly in
\cref{eq:isotropic-distance,eq:isotropic-saturation}.  At and after the
transition point, the pair $(\rho_{1-1/D},\Phi_r)$ saturates the continuity bound under
the constraint $T(\rho_{\A\B},\sigma_{\A\B})\leq\delta$.
\end{proof}


\section{Proof of the sandwiched sharp continuity bound}\label{sec:sand_main-proof}

Assume that $\sigma_{\A\B}>0$ and
$0<\delta\leq1-D^{-1}$, and let $\sand\sigma_{\B}$ be its optimizer.
Define the comparison point $\sand\tau_{\A\B}$ by \cref{eq:comparison}.

\begin{proposition}[Bounds on the sandwiched supporting operator]
\label{prop:sand_calibrated}
Fix an anchor state $\sigma_{\A\B} >0$.
The state $\sand\sigma_{\B}$ is the unique optimizer for
$\sand\tau_{\A\B}$, and
\begin{equation}
 \sand Q_{\alpha}(\sand\tau)
 =[a+(D-1)b]^{1/\alpha}\sand Q_{\alpha}(\sigma).
 \label{eq:comparison-value}
\end{equation}
Moreover,
\begin{align}
&\text{(spectral bounds)}
&&\left[\frac{a+(D-1)b}{ra}\right]^{\frac{1-\alpha}{\alpha}}
 \I_{\A\B}
\leq\nabla\sand Q_{\alpha}(\sand\tau)
\leq
\left[\frac{a+(D-1)b}{rb}\right]^{\frac{1-\alpha}{\alpha}}
 \I_{\A\B},
\label{eq:score-window}
\\
&\text{(anchor calibration)}
&&\Tr\!\left[\nabla\sand Q_{\alpha}(\sand\tau)\sigma\right]
\leq
\left[\frac{a+(D-1)b}{a}\right]^{\frac{1-\alpha}{\alpha}}
\sand Q_{\alpha}(\sigma).
\label{eq:sand-anchor-calibration}
\end{align}
\end{proposition}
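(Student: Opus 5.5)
Write $S:=\I_{\A}\otimes\sand\sigma_{\B}^{\frac{1-\alpha}{2\alpha}}$, $Y:=S\sigma S$, $M:=\Tr_{\A}[Y^\alpha]$, so that \cref{eq:comparison} reads $(S\sand\tau S)^\alpha=Z:=(a-b)Y^\alpha+rb\,\I_{\A}\otimes M$. The plan is to reduce everything to statements about the Petz-type objects $Y$ and $Z$, mirroring the proof of \cref{prop:score-calibration}.

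\emph{Step 1: optimizer and value.} I will check the fixed-point equation \cref{eq:fixed-point} for $X=\sand\tau$ with the candidate $\sand\sigma_{\B}$. By \cref{eq:comparison-marginal}, $\Tr_{\A}[Z]=[a+(D-1)b]M$. Since $\sand\sigma_{\B}$ is optimal for $\sigma$, \cref{eq:fixed-point} gives $M=\Tr[Y^\alpha]\,\sand\sigma_{\B}$. Hence $\Tr_{\A}[Z]/\Tr[Z]=M/\Tr M=\sand\sigma_{\B}$, so the fixed-point equation holds. The uniqueness part of \cref{lem:sand_supporting} then identifies $\sand\sigma_{\B}$ as the unique optimizer; note $\sand\tau>0$ because $Z\geq rb\,\I\otimes M>0$ by \cref{lem:Schmidt-domination}. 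The value follows directly: $\sand Q_\alpha(\sand\tau)=(\Tr Z)^{1/\alpha}=[a+(D-1)b]^{1/\alpha}(\Tr Y^\alpha)^{1/\alpha}$, which is \cref{eq:comparison-value}.

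\emph{Step 2: gradient and spectral window.} Substituting into \cref{eq:gradient} gives
\[
\nabla\sand Q_\alpha(\sand\tau)=\bigl(\I_{\A}\otimes\Tr_{\A}[Z]^{\frac{1-\alpha}{2\alpha}}\bigr)\,S^{-1}Z^{-\frac{1-\alpha}{\alpha}}S^{-1}\,\bigl(\I_{\A}\otimes\Tr_{\A}[Z]^{\frac{1-\alpha}{2\alpha}}\bigr).
\]
Using $\Tr_{\A}[Z]=c\,\sand\sigma_{\B}$ with $c=[a+(D-1)b]\Tr Y^\alpha$, the outer factors combine with $S^{-1}$ to the scalar $c^{\frac{1-\alpha}{2\alpha}}$ on each side. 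Therefore
\[
\nabla\sand Q_\alpha(\sand\tau)=c^{\frac{1-\alpha}{\alpha}}Z^{-\frac{1-\alpha}{\alpha}}.
\]
For the window, \cref{lem:Schmidt-domination} gives $Y^\alpha\leq r\I\otimes M$, and hence
\[
rb\,\I\otimes M\leq Z\leq ra\,\I\otimes M.
\]
Operator antitonicity of $x\mapsto x^{-p}$ with $p=\frac{1-\alpha}{\alpha}\in(0,1]$ (valid because $\alpha\geq\frac12$) turns this into two-sided bounds by $(r\,\I\otimes M)^{-p}$ scaled by $a^{-p}$ and $b^{-p}$. Since $M=\Tr[Y^\alpha]\sand\sigma_{\B}$, we get $c^p(rM)^{-p}=\bigl[\tfrac{a+(D-1)b}{r}\bigr]^p\sand\sigma_{\B}^{-p}$, where the $\Tr Y^\alpha$ factors cancel. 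This leaves a factor $\I\otimes\sand\sigma_{\B}^{-p}$, so the bound is not scalar. I therefore expect the scalar window \cref{eq:score-window} to be meant on the sandwiched level: one should work with the conjugated operator $S\,\nabla\sand Q_\alpha\,S$, or equivalently with the Hilbert--Schmidt pairing against $S\rho S$. In the main proof, the change-of-measure term would then be handled through the conjugation. I will reconcile this with the paper's normalization when writing it out. The safest route is to prove the sandwich $\bigl[\tfrac{a+(D-1)b}{ra}\bigr]^p\leq c^p S Z^{-p} S\cdots$ in the form actually used later.

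\emph{Step 3: anchor calibration, the main obstacle.} We have
\[
\Tr[\nabla\sand Q_\alpha(\sand\tau)\sigma]=c^p\Tr[Z^{-p}S^{-1}YS^{-1}].
\]
After absorbing $S$, this reduces to controlling $\Tr[Z^{-p}Y]$-type quantities with $Y=(Y^\alpha)^{1/\alpha}$. The target is $c^p a^{-p}\Tr[Y^\alpha]/\cdots$, which matches \cref{eq:sand-anchor-calibration} after using $\sand Q_\alpha(\sigma)=(\Tr Y^\alpha)^{1/\alpha}$. The key inequality is $\Tr[Z^{-p}Y]\leq a^{-p}\Tr[Y^\alpha]$ for $Z\geq aY^\alpha+b(r\I\otimes M-Y^\alpha)\geq aY^\alpha$. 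A naive argument from $Z\geq aY^\alpha$ gives $Z^{-p}\leq a^{-p}Y^{-\alpha p}$ on the support. Then $\Tr[Z^{-p}Y]\leq a^{-p}\Tr[Y^{1-(1-\alpha)}]=a^{-p}\Tr[Y^\alpha]$, using antitonicity of $x^{-p}$ for $p\leq1$ (here $\alpha\geq\frac12$ is essential) and $\alpha p=1-\alpha$. This is the step I expect to need the care the paper attributes to the noncommutative estimate, namely making the support and invertibility argument rigorous and matching the conjugations by $S$. I will first try this direct operator-antitone argument. If it fails, I will fall back on the Fréchet-derivative calibration used in \cref{prop:score-calibration}.
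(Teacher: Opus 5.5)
Your Step~1 is correct and matches the paper's argument, and the key inequality in Step~3 is also the paper's. Step~2, however, rests on a miscomputed gradient. As a result you never prove \cref{eq:score-window}, and you wrongly conclude that it is not a scalar bound.

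\textbf{The gradient.} In \cref{eq:gradient} the middle factor is $[S\sand\tau S]^{-(1-\alpha)}=Z^{-p}$, where $p:=\frac{1-\alpha}{\alpha}$; there are no factors $S^{-1}$. Equivalently, \cref{eq:Danskin-derivative} gives the gradient as $(\Tr Z)^{p}\,SZ^{-p}S$. Since $\Tr Z=c$ and $\I_{\A}\otimes(\Tr_{\A}[Z])^{p/2}=c^{p/2}S$, the correct formula is
\[
\nabla\sand Q_\alpha(\sand\tau)=c^{p}\,SZ^{-p}S=[a+(D-1)b]^{p}\,(\I_{\A}\otimes M^{p/2})\,Z^{-p}\,(\I_{\A}\otimes M^{p/2}),
\]
and not $c^{p}Z^{-p}$.

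\textbf{The spectral window.} You already derived the antitone bounds $(ra)^{-p}(\I_{\A}\otimes M)^{-p}\leq Z^{-p}\leq(rb)^{-p}(\I_{\A}\otimes M)^{-p}$. Conjugating them by $\I_{\A}\otimes M^{p/2}$ gives exactly the scalar window \cref{eq:score-window}. The leftover factor $\I_{\A}\otimes\sand\sigma_{\B}^{-p}$ you found is an artifact of the spurious $S^{-1}$'s.

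Your proposed workaround, a conjugated operator paired with $S\rho S$, would break the main proof anyway. \Cref{lemma:trace-distance} needs $T(\rho,\sigma)\leq\delta$ for the unconjugated pair, and $S\rho S$, $S\sigma S$ satisfy no such bound.

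\textbf{The anchor term.} The same correction settles Step~3. With your formula the anchor term is $c^{p}\Tr[Z^{-p}S^{-1}YS^{-1}]$, and ``absorbing $S$'' is not legitimate because $S$ does not commute with $Z^{-p}$. With the correct gradient, cyclicity gives exactly
\[
\Tr[\nabla\sand Q_\alpha(\sand\tau)\sigma]=c^{p}\Tr[Z^{-p}S\sigma S]=c^{p}\Tr[Z^{-p}Y].
\]
Your direct argument then closes the estimate:
\begin{itemize}
\item $Z\geq aY^\alpha$ and antitonicity give $Z^{-p}\leq a^{-p}Y^{-(1-\alpha)}$.
\item Hence $\Tr[Y^{1/2}Z^{-p}Y^{1/2}]\leq a^{-p}\Tr[Y^\alpha]$. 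There is no support issue, since $Y>0$ because $\sigma>0$ and $\sand\sigma_{\B}>0$.
\item Finally, $c^{p}a^{-p}\Tr[Y^\alpha]=\bigl[\frac{a+(D-1)b}{a}\bigr]^{p}(\Tr[Y^\alpha])^{1/\alpha}$, which is \cref{eq:sand-anchor-calibration}.
\end{itemize}
This is precisely the paper's route, phrased there through the reconstruction \cref{eq:proof-anchor-reconstruction} of $\sigma$. No Fr\'echet-derivative fallback is needed.
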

\noindent The proof is deferred to \cref{sec:sand_technical}.

\begin{proof}[Proof of \cref{thm:sand_main}]
Assume first that $\rho$ and $\sigma$ are faithful and that
$0<\delta\leq1-D^{-1}$ with $T(\rho,\sigma)\leq\delta$.  Use $\sigma$ as the
anchor, choose its optimizer $\sand\sigma_{\B}$, and construct the comparison
point $\sand\tau$ by \cref{eq:comparison}.  
\cref{lem:sand_supporting} gives
\[
 \sand Q_{\alpha}(\rho)
 \leq
 \Tr\!\left[\nabla\sand Q_{\alpha}(\sand\tau)\rho\right].
\]
By \cref{lemma:trace-distance,prop:sand_calibrated},
\begin{align}
\sand Q_{\alpha}(\rho)
&\leq
\left[
\frac{a+(D-1)b}{a}
\right]^{\frac{1-\alpha}{\alpha}}
\sand Q_{\alpha}(\sigma)
+
\delta
\left[
\frac{a+(D-1)b}{r}
\right]^{\frac{1-\alpha}{\alpha}}
\times
\left[
b^{-\frac{1-\alpha}{\alpha}}
-a^{-\frac{1-\alpha}{\alpha}}
\right].
 \label{eq:before-scalar}
\end{align}
The scalar identity
\begin{align}
&\delta[a+(D-1)b]^{\frac{1-\alpha}{\alpha}}
\left[
 b^{-\frac{1-\alpha}{\alpha}}
 -a^{-\frac{1-\alpha}{\alpha}}
\right]
=
[a+(D-1)b]^{1/\alpha}
-
\left[
\frac{a+(D-1)b}{a}
\right]^{\frac{1-\alpha}{\alpha}}
\label{eq:scalar-identity}
\end{align}
follows from the barycentric identity
\begin{equation}
 a+(D-1)b
 =(1-\delta)a^{-\frac{1-\alpha}{\alpha}}
 +\delta b^{-\frac{1-\alpha}{\alpha}}.
\label{eq:barycentric}
\end{equation}
The right-hand side of \cref{eq:scalar-identity} is nonnegative on the
increasing branch.  Using
$\sand Q_{\alpha}(\sigma)\geq r^{-\frac{1-\alpha}{\alpha}}$ from
\cref{lem:lower_bound} in \cref{eq:before-scalar}, we obtain
\begin{align}
 \sand Q_{\alpha}(\rho)
 \leq
 [a+(D-1)b]^{1/\alpha}
 \sand Q_{\alpha}(\sigma).
 \label{eq:Q-ratio}
\end{align}
Equation \cref{eq:H-Q} now gives the required one-sided entropy estimate.
Interchanging $\rho$ and $\sigma$ proves the absolute-value bound.

At $\delta=1-D^{-1}$, one has
$\Gamma_{\alpha,D}(\delta)=\log D$,
which is the full entropy range from \cref{lem:lower_bound}; the same plateau holds
for every larger distance.  The case $\delta=0$ is immediate.

For singular states, use the common regularization
\begin{align}
 \rho_{\epsilon}=(1-\epsilon)\rho+\epsilon\frac{\I_{\A\B}}{d_{\A}d_{\B}},
 \qquad
 \sigma_{\epsilon}=(1-\epsilon)\sigma+\epsilon\frac{\I_{\A\B}}{d_{\A}d_{\B}}.
\end{align}
The trace distance does not increase, the faithful theorem applies, and
continuity from \cref{lem:concavity} permits $\epsilon\downarrow0$.
\end{proof}

\section{Classical conditioning: recovery of Jabbour--Datta}
\label{sec:classical-conditioning}

We finally consider quantum-classical states with a classical conditioning system $\Y$,
\begin{equation}\label{eq:quantum-classical-state}
	\rho_{\A\Y}
	=\sum_y\rho_{\A}^{y}\otimes|y\rangle\!\langle y|_{\Y},
\end{equation}
where the positive blocks $\rho_{\A}^{y}$ are subnormalized.  
This is the
quantum-classical setting of Jabbour and Datta
\cite{JabbourDatta2022}: the unconditioned system $\A$ may be fully quantum,
but the side information $\Y$ is classical.

\begin{corollary}[Jabbour--Datta continuity bound]
	\label{cor:Jabbour-Datta}
Let $0<\alpha<1$, and let $\rho_{\A\Y},\sigma_{\A\Y}$ be states of the form
\cref{eq:quantum-classical-state} in the same classical basis.  If
$T(\rho,\sigma)\leq\delta$, then
\begin{equation}\label{eq:Jabbour-Datta-bound}
 \abs{H_\alpha^\uparrow(\A|\Y)_\rho
 -H_\alpha^\uparrow(\A|\Y)_\sigma}
 \leq
 \begin{cases}
 \displaystyle
 \frac{1}{1-\alpha}
 \log\!\left[(1-\delta)^\alpha
 +(d_{\A}-1)^{1-\alpha}\delta^\alpha\right],
 &0\leq\delta\leq1-d_{\A}^{-1},\\[3mm]
 \log d_{\A},
 &1-d_{\A}^{-1}\leq\delta\leq1.
 \end{cases}
\end{equation}
The bound is sharp.
\end{corollary}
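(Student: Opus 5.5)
We follow the Petz argument of \cref{sec:main-proof}, with $\B$ replaced by the classical register $\Y$ and the effective dimension replaced by $D=d_{\A}$. The Schmidt-rank factor $r$ is replaced by $1$, since for block-diagonal operators $X_{\A\Y}=\sum_y X^y_{\A}\otimes|y\rangle\!\langle y|$ one has $X^y_{\A}\leq \I_{\A}\Tr[X^y_{\A}]$. This is the classical analogue of \cref{lem:Schmidt-domination}.

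\textbf{Reduction to one-sided bound and setup.} We first regularize to faithful states by mixing with $\I_{\A}\otimes\I_{\Y}/(d_{\A}d_{\Y})$, which preserves the classical form and does not increase trace distance. We take $0<\delta<1-d_{\A}^{-1}$, so that $a=(1-\delta)^\alpha>b=(\delta/(d_{\A}-1))^\alpha$. Since $Q_\alpha$ only sees the blocks, $Q_\alpha(\sigma)=\sum_y (\Tr[(\sigma^y)^\alpha])^{1/\alpha}$. The comparison point is defined blockwise:
\[
\tau^y:=\bigl[a(\sigma^y)^\alpha+b\bigl(\I_{\A}\Tr[(\sigma^y)^\alpha]-(\sigma^y)^\alpha\bigr)\bigr]^{1/\alpha}.
\]
Within each block, $(\tau^y)^\alpha$ is a function of $\sigma^y$, so all operators commute there. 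Then $\Tr[(\tau^y)^\alpha]=[a+(d_{\A}-1)b]\Tr[(\sigma^y)^\alpha]$. Moreover, the supporting operator from \cref{lem:supporting-operator} is block diagonal and, within block $y$, is a scalar function. Writing $s_y:=\Tr[(\sigma^y)^\alpha]$ and $\lambda_i$ for the eigenvalues of $\sigma^y$, it reads
\[
G^y=\bigl([a+(d_{\A}-1)b]s_y\bigr)^{\frac{1-\alpha}{\alpha}}\bigl(\mu_i\bigr)^{-\frac{1-\alpha}{\alpha}},\qquad \mu_i:=(a-b)\lambda_i^\alpha+b s_y .
\]

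\textbf{Scalar estimates replacing \cref{prop:score-calibration}.} Since $0\le\lambda_i^\alpha\le s_y$, we get $b s_y\le\mu_i\le a s_y$. Hence all eigenvalues of $G$ lie in the interval $[a+(d_{\A}-1)b]^{\frac{1-\alpha}{\alpha}}\bigl[a^{-\frac{1-\alpha}{\alpha}},\,b^{-\frac{1-\alpha}{\alpha}}\bigr]$. The essential point is that the $s_y$ cancel, so the spectral window is uniform across blocks and no restriction on $\alpha$ is needed.

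The anchor term is
\[
\Tr[G\sigma]=[a+(d_{\A}-1)b]^{\frac{1-\alpha}{\alpha}}\sum_y s_y^{\frac{1-\alpha}{\alpha}}\sum_i \lambda_i\,\mu_i^{-\frac{1-\alpha}{\alpha}}.
\]
We need $\sum_i\lambda_i\mu_i^{-\frac{1-\alpha}{\alpha}}\le a^{-\frac{1-\alpha}{\alpha}}s_y^{1/\alpha}\,s_y^{-\frac{1-\alpha}{\alpha}}$, that is, $\lambda_i\le a^{-\frac{1-\alpha}{\alpha}}\lambda_i^{\alpha\cdot\frac1\alpha}$-type termwise control. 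Using $\mu_i\ge a\lambda_i^\alpha$, which holds since $b(s_y-\lambda_i^\alpha)\ge0$, gives $\lambda_i\mu_i^{-\frac{1-\alpha}{\alpha}}\le a^{-\frac{1-\alpha}{\alpha}}\lambda_i^{1-(1-\alpha)}=a^{-\frac{1-\alpha}{\alpha}}\lambda_i^\alpha$. Summing over $i$ yields $a^{-\frac{1-\alpha}{\alpha}}s_y$, so $\Tr[G\sigma]\le\bigl(\tfrac{a+(d_{\A}-1)b}{a}\bigr)^{\frac{1-\alpha}{\alpha}}\sum_y s_y^{1/\alpha}$, which is exactly the classical analogue of \cref{eq:anchor-calibration}. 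This elementary scalar inequality replaces the noncommutative calibration estimate, and it is the step I would check most carefully. It is also the only place where the restriction $\alpha\ge\frac12$ entered the quantum proof, and here it is not needed.

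\textbf{Closing.} We combine \cref{lem:supporting-operator}, \cref{lemma:trace-distance}, the scalar identity \cref{eq:exact-scalar-closure} with $D=d_{\A}$, and the lower bound $Q_\alpha(\sigma)\ge1$. This bound holds because $\sum_y s_y^{1/\alpha}\ge\sum_y\Tr[\sigma^y]=1$, since $\Tr[X^\alpha]^{1/\alpha}\ge\Tr X$ for $\alpha<1$; it is the $r=1$ case of \cref{lem:lower_bound}. Exactly as in \cref{eq:Q-ratio-final}, these give $Q_\alpha(\rho)\le[a+(d_{\A}-1)b]^{1/\alpha}Q_\alpha(\sigma)$, and the Sibson formula \cref{eq:Sibson-formula} then gives the one-sided bound; symmetry gives the absolute value. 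For the plateau, the range $0\le H^\uparrow_\alpha(\A|\Y)\le\log d_{\A}$ follows from the same two bounds $1\le Q_\alpha\le d_{\A}^{\frac{1-\alpha}{\alpha}}$. For sharpness, take a single classical symbol with $\sigma=|0\rangle\!\langle0|_{\A}$ and $\rho=(1-\delta)|0\rangle\!\langle0|+\frac{\delta}{d_{\A}-1}(\I-|0\rangle\!\langle0|)$. This is the isotropic model with $r=1$, and it attains equality by \cref{eq:isotropic-saturation}.
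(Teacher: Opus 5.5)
Your proof is correct and follows essentially the same route as the paper. Both specialize the Petz construction to $r=1$, $D=d_{\A}$ and use that within each classical block the envelope $\Tr[(\sigma^y)^\alpha]\I_{\A}$ commutes with $(\sigma^y)^\alpha$. The spectral window and the anchor calibration then reduce to your scalar bounds $b s_y\le\mu_i\le a s_y$ and $\mu_i\ge a\lambda_i^\alpha$, which are exactly the commuting case of \cref{thm:one-ray-transport} recorded in \cref{remark:commuting} and hold for every $\alpha\in(0,1)$. Both close with $Q_\alpha(\sigma)\ge1$, the range $1\le Q_\alpha\le d_{\A}^{(1-\alpha)/\alpha}$, and the single-symbol diagonal isotropic pair.

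Two small corrections. First, the anchor calibration is not the only place where $\alpha\ge\frac12$ entered the quantum proof: the spectral bounds and \cref{lem:lower_bound} also used it. Your blockwise scalar arguments cover those steps as well, so nothing breaks. Second, for $\delta>1-d_{\A}^{-1}$ you should state that sharpness comes from the endpoint pair at $\delta=1-d_{\A}^{-1}$.
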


\begin{proof}
	For a classical conditioning system,
	\begin{align}
		\begin{split}
		\sigma_{\A\Y}^{\alpha}
		&=
		\sum_y(\sigma_{\A}^{y})^\alpha
		\otimes|y\rangle\!\langle y|_{\Y},
		\\
		\I_{\A}\otimes\Tr_{\A}[\sigma_{\A\Y}^{\alpha}]
		&=
		\sum_y
		\Tr[(\sigma_{\A}^{y})^\alpha]\I_{\A}
		\otimes|y\rangle\!\langle y|_{\Y}.
		\label{eq:classical-blocks}
		\end{split}
	\end{align}
	Hence,
	\[
	0\leq\sigma_{\A\Y}^{\alpha}
	\leq
	\I_{\A}\otimes\Tr_{\A}[\sigma_{\A\Y}^{\alpha}],
	\qquad
	\left[
	\sigma_{\A\Y}^{\alpha},
	\I_{\A}\otimes\Tr_{\A}[\sigma_{\A\Y}^{\alpha}]
	\right]=0.
	\]
	Thus the fully quantum construction applies with \(r=1\) and
	\(D=d_{\A}\).  More importantly, within each classical \(y\)-block the
	conditional envelope is a scalar multiple of \(\I_{\A}\).  Consequently,
	the inverse Fr\'echet derivative in the supporting operator reduces to
	ordinary scalar division.  By \cref{remark:commuting}, both the spectral bounds and the anchor calibration in
	\cref{prop:score-calibration} therefore hold for every
	\(p=1/\alpha>1\), rather than only for \(1<p\leq2\) because  neither operator convexity of
	$x^{1/\alpha}$ nor operator monotonicity of
	$x^{(1-\alpha)/\alpha}$ is required.
	
	The remaining lower-range estimate is also immediate blockwise:
	\[
	Q_\alpha(\sigma_{\A\Y})
	=
	\sum_y
	\left(\Tr[(\sigma_{\A}^{y})^\alpha]\right)^{1/\alpha}
	\geq
	\sum_y\Tr[\sigma_{\A}^{y}]
	=1,
	\]
	because
	\(
	(\Tr [Z])^{1/\alpha}\geq\Tr[Z^{1/\alpha}]
	\)
	for \(Z\geq0\) and \(1/\alpha>1\).
	Therefore the proof from
	\cref{eq:supporting-transport-split} through
	\cref{eq:Q-ratio-final} applies verbatim with \(r=1\) and
	\(D=d_{\A}\), for every \(0<\alpha<1\).
	Interchanging \(\rho\) and
	\(\sigma\), followed by a common block-diagonal full-rank
	regularization, gives the absolute-value bound.  
	The plateau follows from
    $1\leq Q_\alpha(\omega_{\A\Y})
    \leq d_{\A}^{(1-\alpha)/\alpha}$.
    Sharpness already holds for states supported on one classical value of
    $\Y$ and diagonal in a fixed basis of $\A$.
\end{proof}


\section{Discussion and outlook}\label{sec:discussion}

\subsection{Continuity of exponent functions}
The established continuity theorem also yields quantitative robustness guarantees for
the exponent functions arising in the quantum information-processing tasks
discussed in \cref{sec:introduction}.  As an illustration, consider the
following random-coding exponent function at rate \(R\):
\begin{align}
    E_{\mathrm{r}}(R)_{\rho}
    := \sup_{\alpha \in [1/2,1)}  \frac{1-\alpha}{\alpha} \left[ R - H_{\alpha}^{\uparrow}(\A|\B)_{\rho} \right].
\end{align}
Then, \cref{thm:main} implies that
\begin{align}
    E_{\mathrm{r}}(R)_{\rho}
    &\leq \sup_{\alpha \in [1/2,1)} \left\{ \frac{1-\alpha}{\alpha} \left[ R - H_{\alpha}^{\uparrow}(\A|\B)_{\sigma} \right] + \frac{1-\alpha}{\alpha} \Gamma_{\alpha,D}(\delta) \right\}
    \\
    &\leq E_{\mathrm{r}}(R)_{\sigma} + \sup_{\alpha \in [1/2,1)} \frac{1-\alpha}{\alpha} \Gamma_{\alpha,D}(\delta)
    \\
    &= E_{\mathrm{r}}(R)_{\sigma} + \Gamma_{1/2,D}(\delta).
\end{align}
Then, the worst sensitivity of the random-coding exponent is governed by the endpoint $\alpha = 1/2$, i.e.,
\begin{align}
    \Gamma_{1/2,D}(\delta)
    &= 2 \log \left[ \sqrt{1-\varepsilon} + \sqrt{(D-1)\varepsilon} \right],
    \qquad \varepsilon := \min\{\delta, 1-D^{-1}\}
    \\
    &\leq
    2\sqrt{(D-1)\delta}
\end{align}
Hence, the random-coding exponent is uniformly $1/2$-H\"older continuous in trace distance.

\subsection{Relation to the order-one proof}
Our argument and the sharp order-one proof of \cite{BertaEtAl2026} share a similar intrinsic structure.
Indeed, the comparison point in
\cref{eq:comparison-point-explicit} converges $(\alpha\to1)$ to the normalized state
\begin{equation} \label{eq:order-one-tau}
    \tau_{\sigma,\delta}^{(1)}
    =
    (1-\delta)\sigma_{\A\B}
    +
    \frac{\delta}{D-1}
    \left(
        r\I_{\A}\otimes\sigma_{\B}
        -
        \sigma_{\A\B}
    \right).
\end{equation}
The partial-trace identity
\cref{eq:comparison-partial-trace} correspondingly $\Tr_{\A}[\tau_{\sigma,\delta}^{(1)}]
    =
    \sigma_{\B}$.
The first-order limit of our supporting operator
\(G_{\sigma,\delta}^{(\alpha)}\) converges to
\begin{equation}
\label{eq:order-one-score-limit}
    \frac{
        G_{\sigma,\delta}^{(\alpha)}
        -
        \I_{\A\B}
    }{1-\alpha}
    \rightarrow
    G_{\A\B}
    :=
    -\log\tau_{\sigma,\delta}^{(1)}
    +
    \I_{\A}\otimes\log\sigma_{\B}.
\end{equation}
The supporting-operator bound (\Cref{lem:supporting-operator}) implies that
\begin{align}
    H(\A|\B)_\rho
    &\leq
    \Tr\!\left[
        G_{\A\B} \rho_{\A\B}
    \right]
    =
    \underbrace{\Tr\!\left[
        G_{\A\B} (\rho_{\A\B}-\sigma_{\A\B})
    \right]}_{\text{change-of-measure term}}
    +
    \underbrace{\Tr\!\left[
        G_{\A\B} \sigma_{\A\B}
    \right]}_{\text{anchor term}},
    \label{eq:order-one-score-decomposition}
\end{align}
where the inequality coincides with the relative-entropy data-processing inequality in \cite{BertaEtAl2026}:
\begin{equation}
    \Tr\!\left[
         G_{\A\B} \rho_{\A\B}
    \right]
    -
    H(\A|\B)_\rho
    =
    D\!\left(
        \rho_{\A\B}\middle\|\tau_{\sigma,\delta}^{(1)}
    \right)
    -
    D\!\left(
        \rho_{\B}\middle\|\sigma_{\B}
    \right)
    \geq0.
\end{equation}

The spectral bound, \(-\log(ar) \I_{\A\B} \leq G_{\sigma,\delta}^{(1)}\leq -\log (br) \I_{\A\B}\) follows directly from \cref{eq:order-one-tau} and \cref{eq:root-equality-family} as shown in \cite{BertaEtAl2026}, 
while the anchor term is exactly
\begin{align}
    \Tr\!\left[
        G_{\A\B} \sigma_{\A\B}
    \right]
    = H(\A|\B)_{\sigma} + D\!\left(
        \sigma_{\A\B}\big\|\tau_{\sigma,\delta}^{(1)}
    \right).
\end{align}
Since 
$D\!\left(
        \sigma_{\A\B}\big\|\tau_{\sigma,\delta}^{(1)}
    \right) \leq -\log (1-\delta)$ by \cref{eq:order-one-tau}, putting these estimates together gives
\begin{align}
    H(\A|\B)_\rho-H(\A|\B)_\sigma
    &\leq
    \delta\log
    \frac{(D-1)(1-\delta)}{\delta}
    -\log(1-\delta)
    \notag\\
    &=
    h_2(\delta)+\delta\log(D-1).
\end{align}
Thus, at order one, data processing establishes the supporting-operator
inequality and the Schmidt-rank domination
\(\tau_{\sigma,\delta}^{(1)}\geq(1-\delta)\sigma\) provides the anchor
calibration.  
In our R\'enyi proof, these roles are played respectively by
the concavity of \(Q_\alpha\) and the calibration estimate in
\cref{prop:score-calibration}, though the latter part and the spectral bounds of the supporting operator
require more involved matrix analysis for the case of $\alpha \in [\frac12,1)$.

\subsection{Orders below one half}
The fully quantum proof relies on the facts that
\(x^{1/\alpha}\) is operator convex and that
\(x^{(1-\alpha)/\alpha}\) is both operator monotone and operator concave.
These properties hold in the required ranges precisely when
\(\alpha\in[1/2,1)\), and are used in
\cref{lem:common-shift,thm:one-ray-transport}.
For \(\alpha<1/2\), the corresponding powers leave these operator ranges,
so the present proof does not extend.  
Whether the same sharp continuity
modulus remains valid for \(0<\alpha<1/2\) is left open.

\subsection{Orders above one}
The binary modulus \(\Gamma_{\alpha,D}\) from \cref{eq:Gamma} does not
extend sharply to the min-entropy endpoint. 
To see this, let us recall the folklore result of the sharp continuity bound of the min-entropy:\footnote{Indeed, let
\(\sand Q_{\infty}(\omega):=\inf\{\Tr [X_{\mathsf B}]:X_{\mathsf B}\geq0,\,
\omega_{\mathsf A\mathsf B}\leq \I_{\mathsf A}\otimes X_{\mathsf B}\}\),
so that
\(\widetilde H_\infty^\uparrow(\mathsf A|\mathsf B)_\omega=-\log \sand Q_{\infty}(\omega)\)
and \(d_{\mathsf A}^{-1}\leq \sand Q_{\infty}(\omega)\leq r\).
Writing \(T(\rho,\sigma)\leq\delta\) and
\(\Delta_+=(\rho-\sigma)_+\), the domination
\(\Delta_+\leq rI_{\mathsf A}\otimes\Tr_{\mathsf A}[\Delta_+]\)
implies \(\sand Q_{\infty}(\rho)\leq \sand Q_{\infty}(\sigma)+r\delta\).
After interchanging \(\rho,\sigma\), this gives
\(\max\{\sand Q_{\infty}(\rho)/\sand Q_{\infty}(\sigma),\sand Q_{\infty}(\sigma)/\sand Q_{\infty}(\rho)\}
 \leq1+d_{\mathsf A}r\delta=1+D\delta\);
the range of \(\sand Q_{\infty}\) additionally gives the plateau \(\log D\).
Sharpness follows from the isotropic family
\eqref{eq:quantum-equality-family}: for every
\(\delta\in[0,1]\), the pair 
$\left(
\rho_{\max\{1-1/D-\delta,\,0\}},
\rho_{1-1/D}
\right)$
attains the corresponding right-hand side.
}
\begin{align}
\abs{\sand H_{\infty}^\uparrow(\A|\B)_\rho
				-\sand H_{\infty}^\uparrow(\A|\B)_\sigma}
			\leq
            \min\{\log D, \log(1+D\delta)\}.
\end{align}
For background and basic properties on conditional min-entropy, see
\cite{tom-thesis}; related, almost sharp, continuity bounds are
given in \cite{BeigiGoodarzi2023, BluhmCapelGondolfMoebus2026}.
By contrast, the formal endpoint of the binary expression satisfies,
for \(0<\delta<1-D^{-1}\),
\begin{equation}
    \Gamma_{\infty,D}(\delta)
    =
    \log\frac1{1-\delta}
    <
    \min\left\{
        \log D,\,
        \log(1+D\delta)
    \right\}.
\end{equation}
Hence the sharp extremal geometry changes above order one: the continuation
of the \(\alpha<1\) binary modulus strictly underestimates the sensitivity
of the conditional min-entropy.  To the best of our knowledge, for finite
orders \(1<\alpha<\infty\), the sharp continuity modulus remains unknown
even for the classical R\'enyi conditional entropy.

\subsection{Other conditional R\'enyi entropies}
Sharp unrestricted continuity bounds remain open for the non-optimized
Petz and sandwiched conditional R\'enyi entropies, in which the reference
state on the conditioning system is fixed to the marginal
\(\rho_{\B}\) rather than optimized.
Related continuity inequalities for the non-optimized sandwiched
conditional R\'enyi entropy, under the additional assumption that the two
states have the same \(\B\)-marginal, were recently obtained by
Vershynina \cite{Vershynina26}.
Another direction is to study conditional entropies induced by other
quantum R\'enyi divergences, including the \((\alpha,z)\)-family
\cite{AD15}, the integral-representation and layer-cake divergences
\cite{hirche2023quantum,LHC25_layer_cake}, and the
\((\alpha,z,\lambda)\)-family \cite{rubboli2026quantum}.

\section*{Acknowledgements}
We sincerely thank Mario Berta for insightful discussions and useful suggestions, and Milan Mosonyi for organizing the \emph{\href{https://math.bme.hu/~mosonyi/QIMP2026/}{Quantum Information Theory and Mathematical Physics 2026}} workshop in Budapest, where this work was completed.

\appendix

\section{Technical Lemmas}\label{sec:transport}

We prove \cref{prop:score-calibration}.  
The following theorem is the noncommutative ingredient for handling the conditioning quantum system $\B$.

\subsection{A common-shift contraction}

For $1<p\leq2$, define the Bregman divergence gap as
\begin{equation}\label{eq:Bregman-gap}
 \Breg_p(A,B)
 :=\Tr\!\left[A^p-B^p-pB^{p-1}(A-B)\right].
\end{equation}

\begin{lemma}[Common-shift contraction]\label{lem:common-shift}
Let $1<p\leq2$, $A>0$, and $C,E\geq0$.  Then
\begin{equation}\label{eq:common-shift}
 \Breg_p(A+E,A+E+C)
 \leq
 \Breg_p(A,A+C).
\end{equation}
\end{lemma}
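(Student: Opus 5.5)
The plan is to interpolate the common shift. For $t\in[0,1]$, set $X_t:=A+tE$ and $Y_t:=X_t+C$, and define
\[
 f(t):=\Breg_p(X_t,Y_t)=\Tr\!\left[X_t^p-Y_t^p+pY_t^{p-1}C\right].
\]
Here I used $X_t-Y_t=-C$. Since $A>0$, we have $X_t>0$ and $Y_t>0$ for all $t$. All power maps are therefore smooth along the path, and $f$ is differentiable. The claim \cref{eq:common-shift} is $f(1)\leq f(0)$, so it suffices to show $f'(t)\leq0$ on $[0,1]$.

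Differentiating with the Fr\'echet derivative \cref{eq:Frechet-derivative} and using $\Tr[\mathrm D_{p,X}(E)]=p\Tr[X^{p-1}E]$ (which follows from \cref{eq:Frechet-identities} and self-adjointness) gives
\[
 f'(t)=p\Tr\!\left[X_t^{p-1}E\right]-p\Tr\!\left[Y_t^{p-1}E\right]+p\Tr\!\left[\mathrm D_{p-1,Y_t}(E)\,C\right].
\]
By Hilbert--Schmidt self-adjointness of $\mathrm D_{p-1,Y_t}$, the last term equals $p\Tr[E\,\mathrm D_{p-1,Y_t}(C)]$. Hence
\[
 f'(t)=p\Tr\!\left[E\left(X_t^{p-1}-Y_t^{p-1}-\mathrm D_{p-1,Y_t}(X_t-Y_t)\right)\right].
\]
Since $E\geq0$, it is enough to prove the operator tangent inequality
\[
 g(X)\leq g(Y)+\mathrm D g_{Y}(X-Y),\qquad g(x)=x^{p-1},
\]
for $X,Y>0$.

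The tangent inequality is the main step. It is exactly where $1<p\leq2$ enters: in that range $0<p-1\leq1$, so $g$ is operator concave on $(0,\infty)$. To derive it, apply operator concavity to $g(Y+s(X-Y))\geq(1-s)g(Y)+s\,g(X)$ for $s\in(0,1]$. Subtract $g(Y)$, divide by $s$, and let $s\downarrow0$. The left-hand side tends to $\mathrm D g_Y(X-Y)$ in operator norm, because $g$ is Fr\'echet differentiable at $Y>0$. Since the positive cone is closed, this yields $\mathrm D g_Y(X-Y)\geq g(X)-g(Y)$, which is the tangent inequality.

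Plugging this in gives $f'(t)=p\Tr[E\,M_t]$ with $M_t\leq0$, so $f'(t)\leq0$. Integrating over $[0,1]$ proves \cref{eq:common-shift}. The only delicate points are that the interpolation stays in the open cone, which is guaranteed by $A>0$, and the passage from operator concavity to the first-order tangent inequality. I expect that passage to be the main (though standard) obstacle.
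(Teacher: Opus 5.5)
Your proposal is correct and follows essentially the same argument as the paper: interpolate the common shift along $A+tE$ and $A+C+tE$, differentiate the Bregman gap using Hilbert--Schmidt self-adjointness of the Fr\'echet derivative, and conclude that the derivative is nonpositive from the first-order tangent inequality for the operator concave map $x\mapsto x^{p-1}$ with $0<p-1\leq1$. Your explicit derivation of that tangent inequality from operator concavity, together with your remark that the path stays in the open positive cone, fills in details the paper leaves implicit.
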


\begin{proof}
Set $A_t:=A+tE$ and $B_t:=A+C+tE$.  Differentiating and using
self-adjointness of Fr\'echet derivatives gives
\begin{align}
 \frac{\dd}{\dd t}\Breg_p(A_t,B_t)
 =p\Tr E\!\left[
 A_t^{p-1}-B_t^{p-1}
 +\mathrm D_{p-1,B_t}(C)
 \right].
 \label{eq:Bregman-derivative}
\end{align}
Because $0<p-1\leq1$, the map $X\mapsto X^{p-1}$ is operator concave.  Its
tangent inequality at $B_t$ gives
\[
 A_t^{p-1}
 =(B_t-C)^{p-1}
 \leq
 B_t^{p-1}-\mathrm D_{p-1,B_t}(C).
\]
Hence the derivative in \cref{eq:Bregman-derivative} is nonpositive.
Integration from $t=0$ to $t=1$ proves the claim.
\end{proof}

\subsection{The one-ray transport theorem} \label{sec:Petz_technical}

\begin{theorem}[One-ray transport inequality]\label{thm:one-ray-transport}
Let $1<p\leq2$, let $0<X\leq Z$, and let $0<\theta\leq1$.  Define
\begin{equation}\label{eq:transported-score}
 Y_\theta
 :=p\,\mathrm D_{p,\,X+\theta(Z-X)}^{-1}(Z^{p-1}).
\end{equation}
Then
\begin{equation}\label{eq:transport-score-interval}
 \I= Y_1 \leq Y_\theta\leq\theta^{1-p}\I
\end{equation}
and
\begin{equation}\label{eq:transport-anchor}
 \Tr[X^pY_\theta]\leq\Tr[XZ^{p-1}].
\end{equation}
\end{theorem}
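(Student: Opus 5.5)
The proof splits into the spectral window \cref{eq:transport-score-interval} and the anchor estimate \cref{eq:transport-anchor}. Throughout, write $W_\theta:=X+\theta(Z-X)$. Because $0<X\leq Z$ and $0<\theta\leq1$, we have the sandwich
\[
\theta Z\leq W_\theta\leq Z ,
\]
since $W_\theta-\theta Z=(1-\theta)X\geq0$.

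For the spectral window, I use two facts recorded after \cref{eq:Frechet-derivative}. The first is that $\mathrm D_{p,W}^{-1}$ is a positive (order-preserving) map. The second is the identity $\mathrm D_{p,W}(\I)=pW^{p-1}$, which gives $\mathrm D_{p,W}^{-1}(pW^{p-1})=\I$.

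\begin{itemize}
\item At $\theta=1$ this gives $Y_1=\I$ immediately.
\item For the lower bound $Y_\theta\geq\I$: since $0<p-1\leq1$, the map $x\mapsto x^{p-1}$ is operator monotone, so $W_\theta\leq Z$ gives $W_\theta^{p-1}\leq Z^{p-1}$. Applying the positive map $p\,\mathrm D_{p,W_\theta}^{-1}$ yields $\I\leq Y_\theta$.
\item For the upper bound: $\theta Z\leq W_\theta$ gives $\theta^{p-1}Z^{p-1}\leq W_\theta^{p-1}$. Hence $p\,\mathrm D^{-1}_{p,W_\theta}(Z^{p-1})\leq\theta^{1-p}\,p\,\mathrm D^{-1}_{p,W_\theta}(W_\theta^{p-1})=\theta^{1-p}\I$.
\end{itemize}

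This part is routine.

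For the anchor estimate \cref{eq:transport-anchor}, I first use Hilbert--Schmidt self-adjointness to rewrite
\[
\Tr[X^pY_\theta]=p\Tr\!\left[\mathrm D^{-1}_{p,W_\theta}(X^p)\,Z^{p-1}\right].
\]
I then plan to study $g(\theta):=\Tr[X^pY_\theta]$ as a function of $\theta$, comparing it with the endpoint values.

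At $\theta=1$ the claim reads $\Tr[X^p]\leq\Tr[XZ^{p-1}]$. This holds because $X^{p-1}\leq Z^{p-1}$ and $X\geq0$.

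For general $\theta$, I would differentiate $\theta\mapsto Y_\theta$ using the derivative of the inverse Fréchet map. The goal is to express
\[
\Tr[XZ^{p-1}]-g(\theta)
\]
as a nonnegative combination of Bregman gaps $\Breg_p$ from \cref{eq:Bregman-gap}. The natural candidates are pairs of the form $(A,A+C)$ and $(A+E,A+E+C)$ with
\[
A\propto X,\qquad C\propto Z-X\geq0,\qquad E\propto(1-\theta)X \text{ or } \theta(Z-X),
\]
so that the common-shift contraction, \cref{lem:common-shift}, supplies the sign. The point of the decomposition $W_\theta=\theta Z+(1-\theta)X$ is exactly that $W_\theta$ is obtained from a scaled copy of $X$ or $Z$ by a common positive shift. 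The Euler-type identities $\mathrm D_{p,X}(X)=pX^p$ and $\mathrm D_{p,cX}=c^{p-1}\mathrm D_{p,X}$ convert traces against $\mathrm D^{-1}$ into traces of powers.

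A first concrete attempt is the following. Using operator convexity of $x^p$ (valid for $1<p\leq2$) in tangent form at $W_\theta$, I would write
\[
X^p\leq W_\theta^p+\mathrm D_{p,W_\theta}(X-W_\theta)+(\text{Bregman remainder}).
\]
Then apply $p\,\mathrm D^{-1}_{p,W_\theta}$ and pair with $Z^{p-1}$. The linear terms produce $\Tr[XZ^{p-1}]$ via $\mathrm D_{p,W}^{-1}(W^p)=W/p$. The remainder must then be shown nonpositive after pairing, and this is where \cref{lem:common-shift} should enter.

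I expect this last step to be the main obstacle. It requires pinning down the exact Bregman identity whose sign is controlled by \cref{lem:common-shift}, and that identity must hold in the noncommutative setting where $X$, $Z$ and $W_\theta$ do not commute. If the direct tangent route fails, the fallback is to show that $\frac{\dd}{\dd\theta}\bigl(\Tr[XZ^{p-1}]-g(\theta)\bigr)$ has a definite sign in terms of $\Breg_p$. One would then integrate from the endpoint $\theta=1$, where the inequality was verified above. The parametrization of $W_\theta$ by a common shift is precisely what makes \cref{lem:common-shift} applicable in that derivative.
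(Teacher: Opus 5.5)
\textbf{There is a genuine gap: the anchor estimate \cref{eq:transport-anchor} is not proved.} Your proof of the spectral window is correct and is the paper's argument: $\theta Z\le W_\theta\le Z$, operator monotonicity of $x^{p-1}$, and positivity of $\mathrm D_{p,W_\theta}^{-1}$ together with $\mathrm D_{p,W_\theta}(\I)=pW_\theta^{p-1}$.

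Your tangent route at $W_\theta$ is in fact the route the paper takes. However, your bookkeeping of it is wrong in a way that points the argument in the wrong direction. Write the expansion as an identity, $X^p=W_\theta^p+\mathrm D_{p,W_\theta}(X-W_\theta)+\Delta_\theta$. Operator convexity of $x^p$ says $\Delta_\theta\ge0$, so $\Tr[Y_\theta\Delta_\theta]\ge0$; it can never be shown nonpositive. The linear terms also do not return $\Tr[XZ^{p-1}]$. Using $\mathrm D_{p,W_\theta}(Y_\theta)=pZ^{p-1}$, $\mathrm D_{p,W_\theta}(W_\theta)=pW_\theta^p$ and self-adjointness, they give $\Tr[W_\theta Z^{p-1}]-p\theta\Tr[(Z-X)Z^{p-1}]$, so
\[
\Tr[X^pY_\theta]=\Tr[XZ^{p-1}]-(p-1)\theta\Tr[(Z-X)Z^{p-1}]+\Tr[Y_\theta\Delta_\theta].
\]
The claim is therefore equivalent to $\Tr[Y_\theta\Delta_\theta]\le(p-1)\theta\Tr[(Z-X)Z^{p-1}]$: a nonnegative remainder must be absorbed by a slack that your tally dropped.

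The paper proves this bound in three steps, none of which your proposal supplies.
\begin{enumerate}
\item The upper spectral bound $Y_\theta\le\theta^{1-p}\I$ from the first half of the theorem, together with $\Delta_\theta\ge0$, gives $\Tr[Y_\theta\Delta_\theta]\le\theta^{1-p}\Tr\Delta_\theta$. This is what removes the noncommuting $Y_\theta$ and reduces everything to a scalar trace.
\item By homogeneity, $\Tr\Delta_\theta=\theta^p\Breg_p(\theta^{-1}X,\theta^{-1}X+Z-X)$. Then \cref{lem:common-shift} with $A=X$, $E=(\theta^{-1}-1)X$ and $C=Z-X$ bounds this by $\theta^p\Breg_p(X,Z)$. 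Your guess $E\propto(1-\theta)X$ is the right shift.
\item The budget identity $\Breg_p(X,Z)=(p-1)\Tr[(Z-X)Z^{p-1}]-\Tr[X(Z^{p-1}-X^{p-1})]$, together with $X^{p-1}\le Z^{p-1}$, gives $\Breg_p(X,Z)\le(p-1)\Tr[(Z-X)Z^{p-1}]$.
\end{enumerate}
Multiplying the three bounds yields exactly the slack $(p-1)\theta\Tr[(Z-X)Z^{p-1}]$.

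Your fallback also anchors at the wrong endpoint. Since $p\,\mathrm D_{p,X}^{-1}(X^p)=X$, one has $\Tr[X^pY_\theta]\to\Tr[XZ^{p-1}]$ as $\theta\downarrow0$. So the inequality is tight at $\theta=0$ and has slack at $\theta=1$. In the scalar case $\Tr[X^pY_\theta]=x^p(z/W_\theta)^{p-1}$ is decreasing in $\theta$. Hence $\Tr[XZ^{p-1}]-\Tr[X^pY_\theta]$ increases with $\theta$, and knowing its sign at $\theta=1$ and integrating toward smaller $\theta$ gives no lower bound. A monotonicity proof would instead have to start at $\theta=0$ and show $\frac{\dd}{\dd\theta}\Tr[X^pY_\theta]\le0$ for noncommuting $X,Z$. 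You give no indication of how to do that, and the paper's direct estimate avoids it entirely.
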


\begin{remark}[Commuting operators] \label{remark:commuting}
	In the commuting case $[X,Z] = 0$, $Y_\theta =  \frac{Z^{p-1}}{[X+\theta(Z-X)]^{p-1}}$.
	\cref{eq:transport-score-interval} holds for all $p>1$ by bounding the denominator as $[\theta X + \theta(Z-X)]^{p-1} \leq [X+\theta(Z-X)]^{p-1} \leq Z^{p-1}$.
	\cref{eq:transport-anchor} holds for all $p>1$ as well 
	because $X^p Y_\theta \leq X^p Y_0 = X Z^{p-1}$.
\end{remark}

\begin{proof}
Since
\[
 \theta Z\leq X+\theta(Z-X)\leq Z
\]
and $x^{p-1}$ is operator monotone,
\[
 \theta^{p-1}Z^{p-1}
 \leq
 \bigl[X+\theta(Z-X)\bigr]^{p-1}
 \leq
 Z^{p-1}.
\]
Compare these inequalities with
\[
 \mathrm D_{p,\,X+\theta(Z-X)}(\I)
 =p\bigl[X+\theta(Z-X)\bigr]^{p-1}
\]
and apply the positive inverse derivative.  This proves
\cref{eq:transport-score-interval}.

For the anchor estimate, define the positive tangent defect
\begin{align}
 \Delta_\theta
 &:={}
 \theta\,\mathrm D_{p,\,X+\theta(Z-X)}(Z-X)\notag\\
 &\quad-
 \left(\bigl[X+\theta(Z-X)\bigr]^p-X^p\right).
 \label{eq:tangent-defect}
\end{align}
Its positivity follows from operator convexity of $x^p$.  Taking traces,
homogeneity, and \cref{lem:common-shift} give
\begin{align}
 \Tr\Delta_\theta
 &=\theta^p
 \Breg_p\!\left(\theta^{-1}X,
 \theta^{-1}X+Z-X\right)\notag\\
 &\leq\theta^p\Breg_p(X,Z).
 \label{eq:defect-common-shift}
\end{align}
Moreover,
\begin{align}
 &(p-1)\Tr[(Z-X)Z^{p-1}]-\Breg_p(X,Z)\notag\\
 &\qquad=\Tr\!\left[X(Z^{p-1}-X^{p-1})\right]\geq0,
 \label{eq:Bregman-budget}
\end{align}
because $x^{p-1}$ is operator monotone.  Thus
\begin{equation}\label{eq:defect-trace-bound}
 \Tr\Delta_\theta
 \leq(p-1)\theta^p\Tr[(Z-X)Z^{p-1}].
\end{equation}
Using $Y_\theta\leq\theta^{1-p}\I$ and
$\Delta_\theta\geq0$,
\begin{equation}\label{eq:transported-defect-bound}
 \Tr[Y_\theta\Delta_\theta]
 \leq(p-1)\theta\Tr[(Z-X)Z^{p-1}].
\end{equation}
Self-adjointness and
$\mathrm D_{p,\,X+\theta(Z-X)}(Y_\theta)=pZ^{p-1}$ now yield
\begin{align}
 &\Tr\!\left[Y_\theta
 \left(\bigl[X+\theta(Z-X)\bigr]^p-X^p\right)\right]\notag\\
 &\qquad=
 p\theta\Tr[(Z-X)Z^{p-1}]
 -\Tr[Y_\theta\Delta_\theta]\notag\\
 &\qquad\geq
 \theta\Tr[(Z-X)Z^{p-1}].
 \label{eq:secant-lower-bound}
\end{align}
Finally, linearity and \cref{eq:Frechet-identities} imply
\begin{align}
 \Tr[XZ^{p-1}]
 &=\frac1p\Tr\!\left[
 Y_\theta\,\mathrm D_{p,\,X+\theta(Z-X)}(X)
 \right]\notag\\
 &=\Tr\!\left[Y_\theta
 \bigl[X+\theta(Z-X)\bigr]^p\right]
 -\theta\Tr[(Z-X)Z^{p-1}].
 \label{eq:anchor-rewrite}
\end{align}
Subtracting $\Tr[X^pY_\theta]$ and using
\cref{eq:secant-lower-bound} proves \cref{eq:transport-anchor}.
\end{proof}

\begin{proof}[Proof of \cref{prop:score-calibration}]
Apply \cref{thm:one-ray-transport} with
\[
 p=\frac1\alpha,
 \qquad
 X=\sigma_{\A\B}^\alpha,
 \qquad
 Z=r\I_{\A}\otimes\Tr_{\A}[\sigma_{\A\B}^\alpha],
 \qquad
 \theta=\frac ba.
\]
The $\alpha$-root of the comparison point in
\cref{eq:comparison-point-explicit} is $a[X+\theta(Z-X)]$.  Homogeneity of the Fr\'echet derivative therefore turns
\cref{eq:transport-score-interval} into
\begin{align}
 a^{-\frac{1-\alpha}{\alpha}}\I
 &\leq
 \frac1\alpha
 \left(
 \mathrm D_{\frac1\alpha,\,
 a\sigma_{\A\B}^\alpha
 +b\left(
 r\I_{\A}\otimes\Tr_{\A}[\sigma_{\A\B}^\alpha]
 -\sigma_{\A\B}^\alpha
 \right)}
 \right)^{-1}\notag\\[-1mm]
 &\hspace{25mm}\left[
 \left(
 r\I_{\A}\otimes\Tr_{\A}[\sigma_{\A\B}^\alpha]
 \right)^{\frac{1-\alpha}{\alpha}}
 \right]
 \leq b^{-\frac{1-\alpha}{\alpha}}\I.
 \label{eq:applied-score-interval}
\end{align}
Multiplication by the prefactor in
\cref{eq:explicit-support-score} proves
\cref{eq:explicit-score-interval}.

Similarly, \cref{eq:transport-anchor} and homogeneity give
\begin{align}
 &\Tr\!\left[
 \sigma_{\A\B}
 \frac1\alpha
 \left(
 \mathrm D_{\frac1\alpha,\,
 a\sigma_{\A\B}^\alpha
 +b\left(
 r\I_{\A}\otimes\Tr_{\A}[\sigma_{\A\B}^\alpha]
 -\sigma_{\A\B}^\alpha
 \right)}
 \right)^{-1}
 \left[
 \left(
 r\I_{\A}\otimes\Tr_{\A}[\sigma_{\A\B}^\alpha]
 \right)^{\frac{1-\alpha}{\alpha}}
 \right]
 \right]\notag\\
 &\qquad\leq
 a^{-\frac{1-\alpha}{\alpha}}
 \Tr\!\left[
 \sigma_{\A\B}^\alpha
 \left(
 r\I_{\A}\otimes\Tr_{\A}[\sigma_{\A\B}^\alpha]
 \right)^{\frac{1-\alpha}{\alpha}}
 \right]\notag\\
 &\qquad=
 a^{-\frac{1-\alpha}{\alpha}}r^{\frac{1-\alpha}{\alpha}}Q_\alpha(\sigma).
 \label{eq:applied-anchor-calibration}
\end{align}
Multiplying by the same prefactor proves
\cref{eq:anchor-calibration}.
\end{proof}

\subsection{Bounds on the sandwiched supporting operator} \label{sec:sand_technical}

\begin{proof}[Proposition~\ref{prop:sand_calibrated}]
Let us denote
\begin{align*}
 Z_{\A\B}
 &:={\left[
 (\I_{\A}\otimes\sand\sigma_{\B}^{\frac{1-\alpha}{2\alpha}})
 \sigma_{\A\B}
 (\I_{\A}\otimes\sand\sigma_{\B}^{\frac{1-\alpha}{2\alpha}})
 \right]^{\alpha}},\\
 Z_{\B}&:=\Tr_{\A}[Z_{\A\B}],\\
 \sand Z_{\A\B}
 &:={\left[
 (\I_{\A}\otimes\sand\sigma_{\B}^{\frac{1-\alpha}{2\alpha}})
 \sand\tau
 (\I_{\A}\otimes\sand\sigma_{\B}^{\frac{1-\alpha}{2\alpha}})
 \right]^{\alpha}}.
\end{align*}
Then \cref{eq:comparison} reads
\begin{equation}
 \sand Z_{\A\B}=(a-b)Z_{\A\B}+rb\,\I_{\A}\otimes Z_{\B}.
 \label{eq:proof-comparison-root}
\end{equation}
Schmidt-rank domination gives
\begin{equation}
 Z_{\A\B}\leq r\,\I_{\A}\otimes Z_{\B}.
 \label{eq:proof-Schmidt}
\end{equation}
Consequently,
\begin{align}
 rb\,\I_{\A}\otimes Z_{\B}
 \leq\sand Z_{\A\B}
 \leq ra\,\I_{\A}\otimes Z_{\B},
 \qquad
 \sand Z_{\A\B}\geq aZ_{\A\B}.
 \label{eq:proof-root-orders}
\end{align}
Furthermore,
\begin{equation}
 \Tr_{\A}\sand [Z_{\A\B}]=[a+(D-1)b]Z_{\B},
 \qquad
 \Tr\sand Z_{\A\B}=[a+(D-1)b]\Tr [Z_{\A\B}].
 \label{eq:proof-root-traces}
\end{equation}
The normalized ratio in the fixed-point equation is therefore unchanged.
By the necessary-and-sufficient characterization in
\cref{lem:sand_supporting}, $\sand\sigma_{\B}$ is the unique optimizer at
$\sand\tau_{\A\B}$, and \cref{eq:comparison-value} follows from the second identity
in \cref{eq:proof-root-traces}.

Using \cref{eq:gradient,eq:proof-root-traces},
\begin{align}
\nabla\sand Q_{\alpha}(\sand\tau)
={}&[a+(D-1)b]^{\frac{1-\alpha}{\alpha}}
 (\I_{\A}\otimes Z_{\B}^{\frac{1-\alpha}{2\alpha}})
 \sand Z_{\A\B}^{-\frac{1-\alpha}{\alpha}}
 (\I_{\A}\otimes Z_{\B}^{\frac{1-\alpha}{2\alpha}}).
\label{eq:proof-score}
\end{align}
Because $0\leq\frac{1-\alpha}{\alpha}\leq1$, the map
$t\mapsto t^{-\frac{1-\alpha}{\alpha}}$ is operator monotone decreasing.  Applying it to
the first two inequalities in \cref{eq:proof-root-orders} proves
\cref{eq:score-window}.

For the anchor estimate, the last inequality in
\cref{eq:proof-root-orders} gives
\begin{align*}
\sand Z_{\A\B}^{-\frac{1-\alpha}{\alpha}}
\leq
a^{-\frac{1-\alpha}{\alpha}}
Z_{\A\B}^{-\frac{1-\alpha}{\alpha}}.
\end{align*}
The fixed-point identity gives
$\sand\sigma_{\B}=Z_{\B}/\Tr [Z_{\A\B}]$.  Since
\begin{align}
 (\I_{\A}\otimes\sand\sigma_{\B}^{\frac{1-\alpha}{2\alpha}})
 \sigma_{\A\B}
 (\I_{\A}\otimes\sand\sigma_{\B}^{\frac{1-\alpha}{2\alpha}})
 =Z_{\A\B}^{1/\alpha},
\end{align}
we can reconstruct the anchor as
\begin{align}
\sigma_{\A\B}
={}&(\Tr [Z_{\A\B}])^{\frac{1-\alpha}{\alpha}}
 (\I_{\A}\otimes Z_{\B}^{-\frac{1-\alpha}{2\alpha}})
 Z_{\A\B}^{1/\alpha}
 (\I_{\A}\otimes Z_{\B}^{-\frac{1-\alpha}{2\alpha}}).
\label{eq:proof-anchor-reconstruction}
\end{align}
Substituting this identity into the trace of \cref{eq:proof-score}, using
cyclicity, and observing that
$\frac{1}{\alpha}-\frac{1-\alpha}{\alpha}=1$, gives
\begin{align}
&\Tr\!\left[
(\I_{\A}\otimes Z_{\B}^{\frac{1-\alpha}{2\alpha}})
Z_{\A\B}^{-\frac{1-\alpha}{\alpha}}
(\I_{\A}\otimes Z_{\B}^{\frac{1-\alpha}{2\alpha}})
\sigma_{\A\B}
\right]=(\Tr [Z_{\A\B}])^{1/\alpha}
=\sand Q_{\alpha}(\sigma).
\end{align}
Combining this equality with the preceding anti-monotone estimate proves
\cref{eq:sand-anchor-calibration}.
\end{proof}

\bibliographystyle{amsalpha}
\bibliography{continuity, reference}

\end{document}